\newtheorem{theorem}{Theorem}[section]
\newtheorem{lemma}[theorem]{Lemma}
\newtheorem{corollary}[theorem]{Corollary}
\newtheorem{mydef}{Definition}
\newcommand{\oi}{\mathcal{O}}
\newcommand{\boi}{\partial \mathcal{O}}
\newcommand{\ideal}{\mathfrak{a}}
\providecommand{\OO}[1]{\operatorname{O}\bigl(#1\bigr)}
\title{Border basis detection is NP-complete}
\author[P.V. Ananth]{Prabhanjan V. Ananth}
\thanks{} 
\address{Dept. of Computer Science and Automation, Indian Institute of
Science}
\email{prabhanjan@csa.iisc.ernet.in}
\author[A. Dukkipati]{Ambedkar Dukkipati}
\thanks{} 
\address{Dept. of Computer Science and Automation, Indian Institute of
Science}
\email{ambedkar@csa.iisc.ernet.in}
\date{}
\begin{document}
\maketitle

\begin{abstract}
Border basis detection (BBD) is described as follows: given a set of generators of an ideal, decide whether that set of generators is a border basis of the ideal with respect to some order ideal. The motivation for this problem comes from a similar problem related to Gr\"obner bases termed as Gr\"obner basis detection (GBD) which was proposed by Gritzmann and Sturmfels (1993). GBD was shown to be NP-hard by Sturmfels and Wiegelmann (1996). In this paper, we investigate the computational complexity of BBD and show that it is NP-complete.  
\end{abstract}

\section{Introduction}
\noindent Gr\"obner bases play an important role in computational commutative algebra and algebraic geometry as they are used to solve classic problems like ideal membership, intersection and saturation of ideals, solving system of polynomial equations and so on. Gr\"obner bases are defined with respect to a `term order' and the choice of the term order plays a crucial role in time required to compute Gr\"obner bases. Gr\"obner bases are also known to be numerically unstable and hence are not suitable to be used to describe ideals which are constructed from measured data. Border bases, an alternative to Gr\"obner bases, is known to show more numerical stability as compared to Gr\"obner bases. 
\par The theory of border bases was used by Auzinger and Stetter~\cite{SolvingMultivarPoly:AuzStet} to solve zero dimensional polynomial systems of equations. There has been ongoing research to extend this solving technique for solving positive dimensional polynomial systems. The notion of border bases was introduced to find a system of generators for zero dimensional ideals having some nice properties. The theory was generalised to positive dimensional ideals by Chen and Meng~\cite{BBpositiveDim:Chen}. The connection between border bases and statistics was explored by Robbiano, Kruezer and Kehrein~\cite{AlgViewBB05}. Kehrein and Kreuzer gave characterizations of border bases~\cite{CharactBB} and also extended Mourrain's idea~\cite{Mourrain99:NFAlgo} to compute border bases~\cite{ComputingBB05}. The border bases as computed by the algorithm were associated with degree compatible term orderings. Brian and Pokutta~\cite{PolyApprBB2} gave a polyhedral characterisation of order ideals and gave an algorithm to compute border bases which was independent of term orderings. They also showed that computing a preference optimal order ideal is NP-hard.   
\par Gritzmann and Sturmfels~\cite{MinkowskiAdd:GritzSturmfels} introduced Gr\"obner basis detection (GBD) problem and solved this problem using Minkowski addition of polytopes. Later Sturmfels and Wiegelmann~\cite{GBD} showed that GBD is NP-hard. For this, they introduced a related problem called SGBD (Structural Gr\"obner basis detection) which was shown to be NP-complete by a reduction from the set packing problem. Using SGBD it was proved that GBD is NP-hard. We introduce a similar problem related to border bases known as Border Basis Detection (BBD) and prove that the problem is NP-complete. 
\par In \S~\ref{borderbases}, we give preliminaries for border bases and describe the border basis detection problem. In \S~\ref{mainresult}, we give a polynomial time reduction from 3,4-SAT to BBD and then we will show the correctness of the reduction. 

\section{Border Bases} 
\label{borderbases}    
\noindent Let $\mathbb{F}[x_1, \ldots ,x_n]$ be a polynomial ring, where $\mathbb{F}$ is a field. $\mathbb{T}^n$ denotes the set of terms \textit{i.e.}, $\mathbb{T}^n=\{{x_1}^{\alpha_1} \ldots {x_n}^{\alpha_n}: (\alpha_1, \ldots ,\alpha_n) \in \mathbb{Z}_{\geq 0}^{n}\}$. The total degree of a term $t={x_1}^{\alpha_1} \ldots {x_n}^{\alpha_n}$ denoted by $\mathrm{deg(t)}$ is $\sum_{i=1}^{n}{\alpha_i}$. We represent all the terms of total degree $i$ by $\mathbb{T}_{i}$ and all the terms of total degree less than or equal to $i$ by $\mathbb{T}_{\leq i}^{n}$.  By support of a polynomial we mean, all the terms appearing in that polynomial \textit{i.e,} support of a polynomial $f=\sum_{i=1}^{s}c_it_i$,  where $c_i \in \mathbb{F}$ and $t_i \in \mathbb{T}^n$ (denoted by $\mathrm{Supp}(f)$) is $\{t_1, \ldots ,t_s\}$. Similarly, support of a set of polynomials is the union of support of all the polynomials in the set \textit{i.e.}, $\mathrm{Supp}(S)= \displaystyle \bigcup_{f \in S} \mathrm{Supp}(f)$.\\
The following notions are useful for the theory of border basis.\\ 
\begin{mydef}
A non-empty finite set of terms $\oi$ $\subset$ $\mathbb{T}^n$ is called an order ideal if it is closed under forming divisors \textit{i.e.}, if $t \in \oi$ and $t'|t$ then it implies $t' \in \oi$. 
\end{mydef}

\begin{mydef}
Let $\oi$ be an order ideal. The border of $\oi$ is the set $$\partial \oi = (\mathbb{T}_{1}^n . \oi) \backslash \oi = (x_1 \oi \cup ... \cup x_n \oi) \backslash \oi.$$ The first border closure of $\oi$ is defined as the set $ \oi \cup \partial \oi$ and it is denoted by $\overline{ \partial \oi}$.
\end{mydef}
\noindent It can be shown that $\overline{ \partial \oi}$ is also an order ideal.
        
\begin{mydef}
Let $\oi$ = $\{t_1,...,t_\mu\}$ be an order ideal, and let $\partial \oi$ = $\{ b_1,...,b_\nu \}$ be it's border. A set of polynomials $G$ = $\{g_1,...,g_\nu \}$ is called an $\oi$-border prebasis if the polynomials have the form $g_j$ = $b_j$ - $\sum\limits_{i=1}^{\mu} \alpha_{ij}t_i$, where $\alpha_{ij} \in \mathbb{F}$ for $1 \leq i \leq \mu$ and $1 \leq j \leq \nu$.
 \end{mydef} 

\noindent Note that the $\oi$-border prebasis consists of polynomials which have exactly one term from $\partial \oi$ and rest of the terms are in order ideal $\oi$. 
\par If a $\oi$-border prebasis belongs to an ideal $\ideal$ and the order ideal has a nice property with respect to an ideal then that $\oi$-border prebasis is termed as $\oi$-border basis. The definition of $\oi$-border basis is given below.

\begin{mydef}
Let $\oi = \{t_1,\ldots , t_\mu\}$ be an order ideal and $G = \{g_1, \ldots , g_\nu \}$ be an $\oi$-border prebasis consisting of polynomials in $\ideal$. We say that the set $G$ is an \textbf{$\oi$-border basis} of $\ideal$ if the residue classes of $t_1, \ldots , t_\mu$ form a $\mathbb{F}$-vector space basis of $\mathbb{F}[x_1,\ldots,x_n]/\ideal$.
\end{mydef}

\noindent It can be shown that an $\oi$-border basis of an ideal $\mathfrak{a}$ indeed generates $\mathfrak{a}$~\cite{AlgViewBB05}. It can also be shown that for a fixed order ideal $\oi$, with respect to an ideal $\mathfrak{a}$ there can be at most one $\oi$-border basis for $\mathfrak{a}$. In \cite{CharactBB}, a criterion was stated for an $\oi$-border prebasis to be $\oi$-border basis termed as ``Buchberger criterion for border bases". The following notion is required for stating that criterion.

\begin{mydef}
Let $G = \{ g_1, \ldots , g_\nu \}$ be an $\oi$-border prebasis. Two prebasis polynomials $g_k,g_l$ are neighbors, where $k,l \in \{1, \ldots ,\nu\}$, if their border terms are related according to $x_i b_k = x_jb_l$ or $x_ib_k=b_l$ for some indeterminates $x_i, x_j$. Then, the corresponding S-polynomials are
$$S(g_k,g_l)= x_ig_k - x_jg_l \mbox{ and } S(g_k,g_l)= x_ig_k - g_l$$
respectively.
\end{mydef}

\noindent We now state the Buchberger criterion for border bases.

\begin{theorem}
An $\oi$-border prebasis $G = \{g_1, \ldots , g_\nu\}$ is an $\oi$-border basis of an ideal $\ideal$ if and only if $G \subset \ideal$ and, for each pair of neighboring prebasis polynomials $g_k,g_l$, there are constant coefficients $c_j \in \mathbb{F}$ such that
$$S(g_k,g_l) = c_1g_1+ \ldots + c_\nu g_\nu.$$
\end{theorem}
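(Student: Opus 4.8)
The plan is to prove both directions of the equivalence, with the forward direction (border basis $\Rightarrow$ S-polynomial reduction) being the easy one and the converse (the rewriting condition on neighbor S-polynomials forces the residue classes of $\oi$ to be a vector space basis) being the substantive part. For the forward direction, if $G$ is an $\oi$-border basis of $\ideal$, then $G \subset \ideal$ by definition, and for any neighboring pair $g_k, g_l$ the S-polynomial $S(g_k,g_l)$ lies in $\ideal$; since the residue classes of $t_1,\dots,t_\mu$ form an $\mathbb{F}$-basis of $\mathbb{F}[x_1,\dots,x_n]/\ideal$, every element of $\ideal$ whose support lies in $\oi \cup \partial\oi$ can be rewritten using $G$ to an $\mathbb{F}$-linear combination of the $t_i$, which must then be zero; unwinding this rewriting expresses $S(g_k,g_l)$ as $\sum c_j g_j$. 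I would formalize this ``rewriting'' via the division/reduction operator associated to an $\oi$-border prebasis: replace any occurrence of a border term $b_j$ by $\sum_i \alpha_{ij} t_i$, and note each such step subtracts a multiple of some $g_j$.

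For the converse, assume $G \subset \ideal$ and every neighbor S-polynomial reduces to zero modulo $G$. The goal is to show the residue classes of $t_1,\dots,t_\mu$ span $\mathbb{F}[x]/\ideal$ and are linearly independent there. Spanning is comparatively direct: the $\mathbb{F}$-span of $\oi$ together with $G$ already lets one rewrite every term, hence every polynomial, modulo $\ideal$ into the span of $\oi$ — one argues by induction on degree that $x_i \cdot t$ for $t \in \oi$ is congruent mod $\ideal$ to something in $\mathrm{span}_\mathbb{F}(\oi)$, using that $x_i t$ is either already in $\oi$ or is a border term $b_j$ with $b_j \equiv \sum_i \alpha_{ij} t_i$. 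The hard part is linear independence: one must show that the rewriting of a polynomial down to $\mathrm{span}_\mathbb{F}(\oi)$ is \emph{well-defined} — independent of the order in which border terms are reduced — because only then does a polynomial in $\ideal$ with support in $\oi$ reduce to itself, forcing it to be $0$. This is exactly where the neighbor S-polynomial condition is used: it is the border-basis analogue of Buchberger's confluence criterion, and the standard route is a Newman's-lemma / Diamond-lemma style argument showing local confluence of the reduction relation, where the only genuinely overlapping ambiguities between the rewriting rules $b_j \mapsto \sum_i \alpha_{ij} t_i$ correspond precisely to pairs of neighboring border terms, and their resolvability is equivalent to $S(g_k,g_l)$ reducing to $0$.

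Concretely, I would set up the reduction relation $\to_G$ on $\mathbb{F}[x_1,\dots,x_n]$ (or on the relevant finite-dimensional space of polynomials supported on $\mathbb{T}^n_{\le d}$ for $d$ large enough), show it is Noetherian, enumerate the critical pairs, observe that a critical pair arises exactly when two border terms $b_k, b_l$ satisfy $x_i b_k = x_j b_l$ or $x_i b_k = b_l$ (the neighbor condition), and check that local confluence at such a pair is equivalent to the existence of the constants $c_j$ with $S(g_k,g_l) = \sum_j c_j g_j$. Newman's lemma then gives global confluence, hence unique normal forms lying in $\mathrm{span}_\mathbb{F}(\oi)$; this yields a well-defined $\mathbb{F}$-linear map $\mathbb{F}[x]/\ideal \to \mathrm{span}_\mathbb{F}(\oi)$ which is a section of the quotient of the inclusion, establishing that the residue classes of $t_1,\dots,t_\mu$ are linearly independent. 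Combined with spanning, this is the desired basis property, completing the proof. The main obstacle is handling the overlaps cleanly — in particular verifying that no critical pairs other than those coming from neighboring border terms occur, and that reducing an S-polynomial (which has support possibly outside $\overline{\partial\oi}$ after one rewriting step) interacts correctly with the reduction relation — and citing/adapting the confluence machinery to the border setting rather than re-deriving it from scratch.
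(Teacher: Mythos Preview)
The paper does not give its own proof of this theorem: immediately after stating it, the text reads ``The proof for the above theorem can be found in \cite{CharactBB}'' and moves on. So there is no in-paper argument to compare your proposal against.

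For what it is worth, your confluence/Diamond-lemma outline is a legitimate route to the result; the cited source (Kehrein--Kreuzer) organizes the same content slightly differently, via the formal multiplication endomorphisms $\mathcal{X}_1,\ldots,\mathcal{X}_n$ of $\mathrm{span}_{\mathbb{F}}(\oi)$ induced by the prebasis and the equivalence ``$G$ is a border basis $\Leftrightarrow$ the $\mathcal{X}_i$ pairwise commute $\Leftrightarrow$ every neighbor S-polynomial lies in $\mathrm{span}_{\mathbb{F}}(G)$'', which is precisely the linear-algebra encoding of your local-confluence check. Your sketch would go through with the expected bookkeeping; the one place to be careful is your remark about supports ``possibly outside $\overline{\partial\oi}$'': in fact every neighbor S-polynomial already has support contained in $\overline{\partial\oi}$, so a single pass replacing border terms lands you in $\mathrm{span}_{\mathbb{F}}(\oi)$, and the condition $S(g_k,g_l)=\sum_j c_j g_j$ is exactly that this normal form is $0$.
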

\noindent The proof for the above theorem can be found in \cite{CharactBB}. In the next section, we state BBD and give our result.\\


\section{Result}
\label{mainresult}
\noindent BBD is described as follows:

\begin{quote}
Given a set of polynomials $\mathcal{F}$ such that $\ideal=\langle \mathcal{F} \rangle$ where $\ideal$ is an ideal, decide whether $\mathcal{F}$ is a $\oi$-border basis of $\ideal$ for some order ideal $\oi$.
\end{quote}

\noindent We first describe the input representation of the polynomials for the BBD instance.  We follow the ``sparse representation" as in \cite{MinkowskiAdd:GritzSturmfels} to represent the polynomials in $\mathcal{F}$. Let $\mathbb{F}[x_1, \ldots ,x_n]$ be the polynomial ring under consideration and let $\mathcal{F}$ be the set of input polynomials in the BBD instance. Consider a polynomial $f=c_1X^{\alpha_1}+ \ldots +{c_s}X^{\alpha_s} \in \mathcal{F}$ where $c_i \in \mathbb{F},\ X^{\alpha_i}={x_1}^{\alpha_{1i}} \ldots {x_n}^{\alpha_{ni}}$ for $i \in \{1, \ldots ,s\}$ and $\alpha_i=(\alpha_{1i}, \ldots ,\alpha_{ni}) \in \mathbb{Z}_{\geq 0}^{n}$. $f$ is represented by it's nonzero field coefficients $c_1, \ldots ,c_k$ and it's corresponding nonnegative exponent vectors $\alpha_1, \ldots ,\alpha_s$. 
\par In this section, we show that BBD is NP-complete. The NP-complete problem we have chosen for our reduction is 3,4-SAT. 3,4-SAT denotes the class of instances of the satisfiability problem with exactly three variables per clause and each variable  or it's complement appears in no more than four clauses. The 3,4-SAT problem was shown to be NP-complete by Tovey~\cite{SimplifiedNPCsat:Tovey}.
\par  Let $\mathcal{I}$ be an instance for the 3,4-SAT problem. Let $X_1, \ldots ,X_n$ be variables and $C_1, \ldots ,C_m$ be clauses in $\mathcal{I}$ such that $\mathcal{I}=C_1 \wedge C_2 \ldots \wedge C_m$. Each clause is a disjunction of three literals. For example, $(X_i \vee \overline{X}_j \vee X_k)$ represents a clause for $i,j,k \in \{1, \ldots ,n\}$. Assume without loss of generality that $X_i$ appears in at least one clause and so does $\overline{X}_i$. Also assume that $X_i$ and $\overline{X}_i$ do not appear in the same clause for any $i \in \{1, \ldots ,n\}$. We construct a BBD instance from this 3,4-SAT instance.\\


\noindent Consider the polynomial ring $$P=\mathbb{F}[x_1, \ldots ,x_n, \overline{x}_1, \ldots , \overline{x}_n, c_1, \ldots ,c_m, x_{c_1}, \ldots ,x_{c_m}, X],$$ where $\mathbb{F}$ is a field. We will reduce the 3,4-SAT instance $\mathcal{I}$ to a set of polynomials $\mathcal{F} \subset P$. Note that $P$ is a polynomial ring with $N=2n+2m+1$ indeterminates. Before we describe the reduction, we list some definitions and observations that will be useful for our reduction.

\begin{itemize}
\item With respect to all the clauses in which $X_i,\overline{X}_i$ appear for $i \in \{1, \ldots ,n\}$, we associate the term $t_{C_{x_i}}=\left( \displaystyle\prod_{j \in S} c_j \right) X^{\alpha}$ where for each $j \in S \subset \{1, \ldots ,m\}$ either $X_i$ or $\overline{X}_i$ appears in $C_j$ and $\alpha=4-|S|$. Note that $\mathrm{deg}(t_{C_{x_i}})=4$.

\item With respect to each $X_i,\overline{X}_i$ for $i \in \{1, \ldots ,n\}$, we associate the terms $t_{X_i}=x_i \overline{x}_i^2 t_{C_{x_i}}$, $t_{\overline{X}_i}={x_i}^2 \overline{x}_i t_{C_{x_i}}$ respectively. Note that $\mathrm{deg}(t_{X_i})=\mathrm{deg}(t_{\overline{X}_i})=7$.

\item We define children of a term $t$ to be $$k(t)=\{t'|\mbox{ for some indeterminate } y,\ t'y=t\}.$$ Note that each term can have at most $N$ children.

\item Extending the above definition, we define children of a set of terms $S$ to be $k(S)=\displaystyle\bigcup_{t \in S}k(t)$. It follows that for two sets of terms $A$ and $B$, $k(A \cup B)=k(A) \cup k(B)$.

\item We define parents of a term $t$ to be $$p(t)=\{t'|\mbox{ for some indeterminate } y,\ ty=t'\}.$$ Note that each term has exactly $N$ parents. 

\item Extending the above definition, we define parents of a set of terms $S$ to be $p(S)=\displaystyle\bigcup_{t \in S}p(t)$.

\item $K_{X_i}=\Big\{ {\frac{t_{X_i}x_{c_l}}{c_l}}\footnote{This notation is used for convinience. If $t',t$ are terms such that $t'x=t$ for some indeterminate x then we represent $t'$ as $\frac{t}{x}$.}  \Big|\ X_i \mbox{ appears in clause } C_l \mbox{ for some }l \in \{1, \ldots ,m\}  \Big\}$ for $i = 1, \ldots ,n$.

\item $K_{\overline{X}_i}=\Big\{ {\frac{t_{\overline{X}_i}x_{c_l}}{c_l}}\footnotemark[\value{footnote}] \Big|\ \overline{X}_i \mbox{ appears in clause } C_l \mbox{ for some }l \in \{1, \ldots ,m\} \Big\}$ for $i = 1, \ldots ,n$.

\item $K_i=K_{X_i} \cup K_{\overline{X}_i} \cup \{t_{X_i},t_{\overline{X}_i} \}$ for $i = 1, \ldots ,n$. 

\item $P_{X_i}=\Big\{t_{X_i}x_{c_l} \Big|\ X_i \mbox{ appears in clause } C_l \mbox{ for some }l \in \{1, \ldots ,m\} \Big\}$ for $i = 1, \ldots ,n$.

\item $P_{\overline{X}_i}=\Big\{t_{\overline{X}_i}x_{c_l} \Big|\ \overline{X}_i \mbox{ appears in clause } C_l \mbox{ for some }l \in \{1, \ldots ,m\} \Big\}$ for $i = 1, \ldots ,n$.

\item $P_i= P_{X_i} \cup P_{\overline{X}_i}$ for $i= 1, \ldots ,n$. The number of clauses $X_i$ or $\overline{X}_i$ appear is $|P_i|$. Hence, $|P_i| \leq 4$.

\item We define $I(t)$ to be the number of indeterminates that divide a term $t$. Note that $I(t)=k(t)$.

\item Region associated with $X_i,\overline{X_i}$ for $i \in \{1, \ldots ,n\}$ is defined as $$R_i=\ k(P_i)\ =\ k(P_{X_i}) \cup k(P_{\overline{X}_i}).$$ In other words $R_i$ consists of all the children of $P_i$ and hence $|R_i| \leq 4N$. For $i,j \in \{1, \ldots ,n\}$ and $i \neq j$, since every term in $R_i$ contains either $x_i$ or $\overline{x}_i$ (and does not contain $x_j,\overline{x}_j$) and similarly every term in $R_j$ contains either $x_j$ or $\overline{x}_j$ (and does not contain $x_i,\overline{x}_i$) and hence  $R_i \cap R_j=\phi$.

\item For a set $S \subset \mathbb{T}^n$, we define $\mathrm{maxdeg}(S)$ to be $\displaystyle\mathrm{max}_{t \in S}{\mathrm\{deg}(t)\}$ \textit{i.e.} $\mathrm{maxdeg}$ is a function from power set of $\mathbb{T}^n$ to $\mathbb{N}$ which associates to each set $S \subset \mathbb{T}^n$ a number $n$ such that there exists a term in $S$ having total degree $n$ and no term in $S$ has total degree greater than $n$.  
\end{itemize}

\noindent We now state and prove a few observations that will be used for the reduction.\\

\begin{lemma}
Two distinct terms can have no more than one common parent \textit{i.e.}, for two distinct terms $t_1,t_2$, $|p(t_1) \cap p(t_2)| \leq 1$.
\end{lemma}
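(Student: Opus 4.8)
The plan is to argue by contradiction. Suppose two distinct terms $t_1,t_2$ had two distinct common parents $s_1,s_2 \in p(t_1)\cap p(t_2)$. By the definition of parent I may write $s_1 = t_1 y_1 = t_2 y_1'$ and $s_2 = t_1 y_2 = t_2 y_2'$ for suitable indeterminates $y_1,y_1',y_2,y_2'$. The first preliminary step is to extract the easy constraints on these indeterminates: since $t_1 \neq t_2$, the equality $t_1 y_1 = t_2 y_1'$ forces $y_1 \neq y_1'$ (if they were equal, cancelling would give $t_1 = t_2$), and for the same reason $y_2 \neq y_2'$; and since $s_1 \neq s_2$ while $s_1 = t_1 y_1$ and $s_2 = t_1 y_2$, we must have $y_1 \neq y_2$.

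Next I would reduce everything to a monomial identity in the indeterminates alone. From $t_1 y_1 = t_2 y_1'$ with $y_1 \neq y_1'$, comparing the exponent of the variable $y_1'$ on the two sides (it does not occur in the factor $y_1$) shows $\deg_{y_1'}(t_1) = \deg_{y_1'}(t_2) + 1$, so $y_1' \mid t_1$; writing $t_1 = y_1' u$ for a term $u$, the same equation gives $t_2 = y_1 u$. Substituting $t_1 = y_1'u$ and $t_2 = y_1 u$ into $t_1 y_2 = t_2 y_2'$ and cancelling $u$ yields the degree-two monomial identity $y_1' y_2 = y_1 y_2'$.

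Finally I would finish by comparing, in $y_1' y_2 = y_1 y_2'$, the multisets of indeterminates appearing on each side: the only possibilities are $y_1' = y_1$ together with $y_2 = y_2'$, or $y_1' = y_2'$ together with $y_2 = y_1$. The first contradicts $y_1 \neq y_1'$ and the second contradicts $y_1 \neq y_2$, so in either case we reach a contradiction; hence $|p(t_1)\cap p(t_2)| \le 1$. I do not anticipate a real obstacle here, since the statement is purely about monomials; the only care needed is the bookkeeping of which indeterminates may coincide and the correct use of $s_1 \neq s_2$, and the substantive move is simply the reduction to the two-variable identity $y_1' y_2 = y_1 y_2'$.
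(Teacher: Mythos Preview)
Your proof is correct and follows essentially the same approach as the paper: set up two distinct common parents, obtain the degree-two monomial identity among the four indeterminates, and derive a contradiction from the constraints $y_1\neq y_1'$ and $y_1\neq y_2$. The only difference is cosmetic---you make the intermediate factorization $t_1=y_1'u$, $t_2=y_1u$ explicit, whereas the paper passes directly to the identity $y_2'y_1=y_1'y_2$.
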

\begin{proof}
Consider two terms $t_1,t_2$ such that $t_1 \neq t_2$. Assume that there exists two distinct terms $t,t'$ such that $t_1,t_2 \in k(t)$ and $t_1,t_2 \in k(t')$. This implies that there exists indeterminates $y_1,y_2,{y}'_1,{y}'_2$ such that\\
$$t_1y_1=t,\ t_2y_2=t,\ t_1{y}'_1=t',\ t_2{y}'_2=t'.$$
This implies that ${y}'_2y_1={y}'_1y_2$. Since, $y_1 \neq {y_1}'$ and $y_1 \neq y_2$, we get a contradiction.
\end{proof} 

\begin{corollary}
For two distinct terms $t_1,t_2$, $|k(t_1) \cap k(t_2)| \leq 1$.
\end{corollary}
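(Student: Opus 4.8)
The plan is to deduce the corollary directly from the preceding lemma by exploiting the elementary duality between the children relation and the parents relation. For any two terms $s$ and $t$ one has $s \in k(t)$ if and only if $t \in p(s)$, since both statements assert that $sy = t$ for some indeterminate $y$. Once this observation is recorded, no new combinatorics is needed: a common child of $t_1$ and $t_2$ is exactly a term having both $t_1$ and $t_2$ among its parents, and the lemma bounds how many such terms there can be.

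Concretely, I would argue by contradiction. Suppose $t_1 \neq t_2$ but $|k(t_1) \cap k(t_2)| \geq 2$, so that there are two distinct terms $s_1, s_2 \in k(t_1) \cap k(t_2)$. Translating the four membership facts $s_1 \in k(t_1)$, $s_1 \in k(t_2)$, $s_2 \in k(t_1)$, $s_2 \in k(t_2)$ through the duality gives $t_1 \in p(s_1)$, $t_2 \in p(s_1)$, $t_1 \in p(s_2)$ and $t_2 \in p(s_2)$, hence $t_1, t_2 \in p(s_1) \cap p(s_2)$ and therefore $|p(s_1) \cap p(s_2)| \geq 2$. But $s_1$ and $s_2$ are distinct terms, so applying the lemma to the pair $(s_1, s_2)$ yields $|p(s_1) \cap p(s_2)| \leq 1$, a contradiction. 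Thus $|k(t_1) \cap k(t_2)| \leq 1$. One could equivalently phrase this without contradiction: any two elements of $k(t_1) \cap k(t_2)$ are terms sharing the two common parents $t_1$ and $t_2$, and the lemma forces them to coincide.

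I do not anticipate a genuine obstacle here, as the content of the argument is bookkeeping. The only points requiring care are that the lemma must be instantiated on the common \emph{children} $s_1, s_2$ and not on $t_1, t_2$ themselves, and that the roles of $k(\cdot)$ and $p(\cdot)$ are not inadvertently swapped when passing through the duality.
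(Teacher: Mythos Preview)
Your argument is correct and is precisely the one the paper intends: the paper's proof reads in full ``This follows from the definition and the previous lemma,'' and what you have written is exactly the unpacking of that sentence via the duality $s \in k(t) \iff t \in p(s)$. There is nothing to add or change.
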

\begin{proof}
This follows from the definition and the previous lemma.
\end{proof}

\begin{corollary}
\label{boundsamep}
Let $S$ be a set of terms and $t$ be a term such that $t \notin S$. Then $|k(t) \cap k(S)| \leq |S|$. 
\end{corollary}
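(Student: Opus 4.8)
The plan is to deduce this from the preceding corollary by distributing the intersection over the union that defines $k(S)$. First I would note, straight from the definition of the children of a set, that $k(S)=\bigcup_{s\in S}k(s)$, so that $k(t)\cap k(S)=\bigcup_{s\in S}\bigl(k(t)\cap k(s)\bigr)$. Now the hypothesis $t\notin S$ enters: every $s\in S$ is distinct from $t$, so the preceding corollary applies to the pair $t,s$ and gives $|k(t)\cap k(s)|\leq 1$. A union of $|S|$ sets, each of cardinality at most one, has cardinality at most $|S|$; that is, $|k(t)\cap k(S)|\leq\sum_{s\in S}|k(t)\cap k(s)|\leq|S|$, which is the claimed bound.

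There is essentially no obstacle here. The only place the hypothesis $t\notin S$ is needed is to legitimise the appeal to the pairwise bound $|k(t)\cap k(s)|\leq 1$; everything else is the elementary fact that the size of a finite union is at most the sum of the sizes. Accordingly I would keep the write-up to just these two observations.
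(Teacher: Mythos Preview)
Your argument is correct and matches the paper's proof essentially line for line: distribute the intersection over the union $k(S)=\bigcup_{s\in S}k(s)$, apply the preceding corollary to each pair $(t,s)$ using $t\neq s$, and bound the union by the sum. If anything, you are slightly more explicit than the paper in noting where the hypothesis $t\notin S$ is actually used.
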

\begin{proof}
Let $S=\displaystyle\bigcup_{i:a_i \in S}\{a_i\}$. We have $$k(t) \cap k(S)=\displaystyle\bigcup_{i:a_i \in S}(k(t) \cap k(a_i)).$$ But,
\begin{eqnarray*}
\Big| \displaystyle\bigcup_{i:a_i \in S}(k(t) \cap k(a_i)) \Big| &\leq& \sum_{i:a_i \in S}|(k(t) \cap k(a_i))|\\ 
&\leq& |S| \mbox{ (from the previous corollary)}.
\end{eqnarray*}
Hence, $|k(t) \cap k(S)| \leq |S|$. 
\end{proof}

\begin{lemma}
\label{regsamep}
No two terms from two different regions can have a common parent \textit{i.e.}, if there are two terms $t_1 \in R_i,\ t_2 \in R_j$ then there exists no term $t_3$ such that $t_1,t_2 \in k(t_3)$. 
\end{lemma}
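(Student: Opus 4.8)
The plan is to exploit the very rigid shape of the terms that generate a region $R_i$. First I would record two invariants satisfied by \emph{every} term $t\in R_i$: (a) for every $j\neq i$, neither $x_j$ nor $\overline{x}_j$ divides $t$; and (b) the exponents of $x_i$ and $\overline{x}_i$ in $t$ add up to at least $2$. Invariant (a) is essentially already observed in the bullet establishing $R_i\cap R_j=\phi$: the term $t_{C_{x_i}}$ involves only the $c_j$'s and $X$, so $t_{X_i}$ and $t_{\overline{X}_i}$ involve, among the $x$/$\overline{x}$ variables, only $x_i$ and $\overline{x}_i$; passing to $P_{X_i},P_{\overline{X}_i}$ multiplies by some $x_{c_l}$ (not an $x_j$ or $\overline{x}_j$), and passing to $R_i$ deletes one indeterminate, so no $x_j,\overline{x}_j$ with $j\neq i$ can ever appear. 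Invariant (b) follows from $t_{X_i}=x_i\overline{x}_i^2\,t_{C_{x_i}}$ and $t_{\overline{X}_i}=x_i^2\overline{x}_i\,t_{C_{x_i}}$: in either case the exponents of $x_i$ and $\overline{x}_i$ sum to $3$, multiplying by some $x_{c_l}$ leaves these two exponents untouched, and taking a child removes at most one indeterminate, so the sum drops to at least $2$.

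Next, I would argue by contradiction. Suppose $t_1\in R_i$ and $t_2\in R_j$ with $i\neq j$ have a common parent $t_3$, so that $t_1y_1=t_3=t_2y_2$ for some indeterminates $y_1,y_2$. Track the exponents of $x_i$ and $\overline{x}_i$ in $t_3$ from both sides. From the $t_1$-side: multiplying by $y_1$ can only increase exponents, so by invariant (b) for $t_1$ the exponents of $x_i$ and $\overline{x}_i$ in $t_3$ still sum to at least $2$. From the $t_2$-side: by invariant (a) applied to $t_2$ (legitimate since $i\neq j$), neither $x_i$ nor $\overline{x}_i$ divides $t_2$, so in $t_3=t_2y_2$ the exponents of $x_i$ and $\overline{x}_i$ are each at most $1$ and at least one of them is $0$ (as $y_2$ is a single indeterminate); hence their sum is at most $1$. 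The two conclusions are incompatible, so no common parent $t_3$ can exist, which is exactly the claim.

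I do not expect a genuine obstacle here — the argument is a short, bounded case analysis. The only place calling for a little care is the verification of invariant (b): one must check it survives both operations that build $R_i$ out of $t_{X_i},t_{\overline{X}_i}$, namely multiplication by a clause variable $x_{c_l}$ (harmless, since $x_{c_l}\notin\{x_i,\overline{x}_i\}$) and deletion of one indeterminate, and one must reason with \emph{exponents} rather than mere divisibility so that the ``sum is at least $2$'' bound is not lost. With invariants (a) and (b) in hand the contradiction is immediate.
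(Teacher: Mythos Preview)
Your proposal is correct and follows essentially the same approach as the paper: both arguments exploit that every term of $R_i$ carries a large $x_i/\overline{x}_i$ contribution (the paper uses the divisor $x_i\overline{x}_i^2$ of $t_{X_i}$, you use the equivalent bound that the $x_i,\overline{x}_i$ exponents sum to at least $2$ in any child), while terms of $R_j$ carry none, and a single multiplication by one indeterminate cannot bridge that gap. Your formulation via the two explicit invariants is slightly cleaner bookkeeping, but the idea is the same.
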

\begin{proof}
Let $t_1 \in R_i$ and $t_2 \in R_j$ for some $i,j \in \{1, \ldots ,n\}$. Assume without loss of generality that $t_1 \in k(t_{X_i}y)$ (a similar argument holds if $t_1 \in k(t_{\overline{X}_i}y)$), where $y$ is an indeterminate such that $t_{X_i}y \in P_{X_i}$. Hence, there exists an indeterminate $y'$ such that $t_1y'=t_{X_i}y$. Now, if we assume that there exists a term $t_3$ such that $t_1,t_2 \in k(t_3)$ then there exists two indeterminates $y_1,y_2$ such that,
$$t_3=t_1y_1=t_2y_2 \Rightarrow t_{X_i}yy_1=t_2y_2y'.$$
But, $x_i{\overline{x}_i}^2|t_{X_i} \Rightarrow x_i{\overline{x}_i}^2|t_2y_2y' \Rightarrow x_i{\overline{x}_i}^2|y_2y'$ (since $x_i,\overline{x}_i$ does not divide any term in $R_j$) and hence a contradiction. 
\end{proof}

\begin{lemma}
\label{notallinbo}
Let $\oi$ be an order ideal. If all the children of a term $t$ are in $\boi$ then $t$ cannot be in $\boi$ and $\oi$ \textit{i.e.}, for a term $t$ such that $k(t) \subset \boi$ then $t \notin \oi, t \notin \boi.$
\end{lemma}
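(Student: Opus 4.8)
The plan is a direct proof by contradiction that handles the two assertions $t\notin\oi$ and $t\notin\boi$ in turn. In both cases the only structural fact needed is that the order ideal and its border are disjoint: by definition $\boi=(\mathbb{T}_1^n\cdot\oi)\setminus\oi$, so $\oi\cap\boi=\emptyset$.

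First I would show $t\notin\oi$. Suppose for contradiction that $t\in\oi$, and pick any child $t'\in k(t)$, so that $t'y=t$ for some indeterminate $y$. Then $t'\mid t$, and since $\oi$ is closed under forming divisors we get $t'\in\oi$; but the hypothesis gives $t'\in\boi$, contradicting $\oi\cap\boi=\emptyset$. This step uses that $t$ has at least one child, i.e.\ $\deg(t)\ge 1$; the exceptional term $1$ has no children and always lies in $\oi$, so it is tacitly excluded, and in every application of the lemma the relevant term has large degree.

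Next I would show $t\notin\boi$. Suppose $t\in\boi$. By the definition of the border, $t=x_i s$ for some indeterminate $x_i$ and some $s\in\oi$. Then $s=t/x_i$ is a child of $t$, so by hypothesis $s\in\boi$; but also $s\in\oi$, again contradicting disjointness. I do not expect any real obstacle here: the argument reduces to two one-line contradictions, and the only care needed is to invoke the correct defining property in each half, namely closure under divisors for the $\oi$ case and the form $x_i\oi$ of border elements for the $\boi$ case, together with the minor bookkeeping around the trivial term.
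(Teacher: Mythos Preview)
Your proof is correct and follows essentially the same route as the paper: for $t\notin\oi$ you use closure under divisors to force a child into $\oi$, and for $t\notin\boi$ you write $t=x_i s$ with $s\in\oi$ and observe that $s$ is a child, both times contradicting $\oi\cap\boi=\emptyset$. The only addition is your remark on the degenerate case $t=1$, which the paper leaves implicit.
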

\begin{proof}
Let $t$ be a term such that $k(t) \subset \boi$. If $t \in \oi$ then $k(t) \subset \oi$ and hence $t \notin \oi$. If $t \in \boi$ then there exists some indeterminate $y'$ such that for some term $t' \in \oi$, we have $t'y'=t$. But $t' \in k(t) \Rightarrow t' \in \boi$, a contradiction. Hence, $t \notin \boi$.
\end{proof}

\begin{lemma}
\label{boundind}
For a term $t$ such that $t \in k(P_i)$ where $i \in \{1, \ldots ,n\}$, then $I(t) \geq |P_i|+2$.
\end{lemma}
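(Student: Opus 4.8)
The plan is to reduce the statement to a single ``parent'' term, bound its number of indeterminate divisors explicitly, and then observe that passing to a child costs at most one indeterminate.

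First I would unwind the hypothesis $t\in k(P_i)=k(P_{X_i})\cup k(P_{\overline X_i})$: without loss of generality $t$ is a child of some $t_{X_i}x_{c_l}\in P_{X_i}$ (the case $t\in k(P_{\overline X_i})$ being entirely symmetric), so there is an indeterminate $y$ with $t\,y=t_{X_i}x_{c_l}$. I would then substitute the definitions $t_{X_i}=x_i\overline x_i^{\,2}\,t_{C_{x_i}}$ and $t_{C_{x_i}}=\bigl(\prod_{j\in S}c_j\bigr)X^{4-|S|}$, where $S\subseteq\{1,\dots,m\}$ is exactly the set of indices of clauses containing $X_i$ or $\overline X_i$. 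In particular $|S|=|P_i|$, we have $l\in S$, and $4-|S|\ge 0$ since $|P_i|\le 4$.

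The core of the argument is then a counting step. The term
\[
t_{X_i}x_{c_l}=x_i\,\overline x_i^{\,2}\Bigl(\prod_{j\in S}c_j\Bigr)X^{4-|S|}x_{c_l}
\]
is divisible by the indeterminates $x_i$, $\overline x_i$, $x_{c_l}$ and $c_j$ for $j\in S$, which gives $|S|+3$ divisors; the only thing to verify is that they are pairwise distinct, and for this the one non-automatic point is that $x_{c_l}$ is not one of the $c_j$, which holds because $c_1,\dots,c_m$ and $x_{c_1},\dots,x_{c_m}$ are distinct indeterminates of $P$. Hence $I(t_{X_i}x_{c_l})\ge |S|+3=|P_i|+3$. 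Finally, since dividing a term by a single indeterminate removes at most one element from its set of indeterminate divisors, $I(t)=I\bigl((t_{X_i}x_{c_l})/y\bigr)\ge I(t_{X_i}x_{c_l})-1\ge |P_i|+2$, which is the claim.

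I do not anticipate a genuine obstacle here: the entire content is the explicit list of distinct indeterminates dividing $t_{X_i}x_{c_l}$ together with the identity $|S|=|P_i|$, so the proof should be short and purely combinatorial.
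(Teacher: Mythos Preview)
Your argument is correct and follows essentially the same route as the paper: compute (or bound) $I(t')$ for a parent $t'\in P_i$, then observe that passing to a child decreases $I$ by at most one. The only cosmetic difference is that the paper computes $I(t')$ exactly via $I(t_{C_{x_i}})=\min(|P_i|+1,4)$ (accounting for whether $X$ actually appears in $t_{C_{x_i}}$), whereas you simply omit $X$ from your list of divisors and work with the lower bound $I(t')\ge |P_i|+3$; both yield $I(t)\ge |P_i|+2$.
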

\begin{proof}
For a term $t' \in P_i$, $I(t')=3+I(t_{C_{x_i}})$, but
\begin{eqnarray*}
I(t_{C_{x_i}})& = &\mathrm{min}(\mbox{number of clauses in which }X_i,\overline{X}_i \mbox{ appear}+1,4)\\
& = &\mathrm{min}(|P_i|+1,4).
\end{eqnarray*}
We have $I(t')=\mathrm{min}(|P_i|+1,4)+3$ and thus for $t \in k(t')$, 
$$I(t) \geq \mathrm{min}(|P_i|+1,4)+2\ =\ \mathrm{min}(|P_i|+3,6) \mbox{ and since $|P_i| \leq 4$,}$$
$$I(t) \geq |P_i|+2.$$ 
\end{proof}

\begin{lemma}
\label{fact1}
Let $t_1,t_2$ be terms such that $t_1t=t_2$ where $t$ is a term and $t \neq 1$. If $x$ is an indeterminate such that $x$ divides $t$ then $t_1\Big|\frac{t_2}{x}$.
\end{lemma}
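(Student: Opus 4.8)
The plan is to reduce the statement to a single substitution on monomials, since all the objects involved are terms and divisibility of terms is just componentwise comparison of exponent vectors. Because $x$ divides $t$ and $t \neq 1$, I would first write $t = x\,t'$ for a uniquely determined term $t'$ (allowing $t' = 1$). This is nothing more than the observation that dividing a monomial by one of its indeterminates produces another monomial: one subtracts $1$ from the exponent of $x$ in $t$, and the result is still a nonnegative exponent vector precisely because $x \mid t$; this is consistent with the quotient notation $\tfrac{t}{x}$ fixed in the earlier footnote.

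Next I would substitute $t = x\,t'$ into the hypothesis $t_1 t = t_2$, obtaining $t_2 = t_1(x\,t') = x(t_1 t')$. In particular $x \mid t_2$, so the term $\tfrac{t_2}{x}$ is well defined, and cancelling the indeterminate $x$ from both sides of $t_2 = x(t_1 t')$ gives $\tfrac{t_2}{x} = t_1 t'$. Since $t_1$ trivially divides $t_1 t'$, this yields $t_1 \mid \tfrac{t_2}{x}$, which is exactly the assertion of the lemma. Alternatively, the same argument can be phrased purely in terms of exponent vectors: writing $\alpha, \beta, \gamma$ for the exponent vectors of $t_1, t_2, t$ and $e$ for the standard basis vector of $x$, the hypothesis reads $\beta = \alpha + \gamma$ with $\gamma \geq e$, so $\beta - e = \alpha + (\gamma - e) \geq \alpha$, i.e. $t_1 \mid \tfrac{t_2}{x}$.

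I do not expect any real obstacle here; the only points requiring a line of care are the degenerate bookkeeping — that $t'$ (equivalently $\gamma - e$) is a genuine term, and that the quotient notation is only applied to a numerator actually divisible by the indeterminate in question — and both follow immediately from $x \mid t$ together with $t \mid t_2$. The lemma is a routine divisibility fact, recorded for later use when comparing children and parents of the border terms (for instance, to propagate divisibility by the factors $x_i \overline{x}_i^2$ or $x_i^2 \overline{x}_i$ appearing in $t_{X_i}$ and $t_{\overline{X}_i}$ through common parents), so the proof should be just the two or three lines above.
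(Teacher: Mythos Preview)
Your proposal is correct and is essentially the paper's own argument: the paper also notes that $x\mid t$ implies $\tfrac{t}{x}$ and $\tfrac{t_2}{x}$ are valid terms, then writes $t_1\bigl(\tfrac{t}{x}\bigr)=\tfrac{t_2}{x}$ to conclude $t_1\mid\tfrac{t_2}{x}$. Your substitution $t=xt'$ is just a renaming of $\tfrac{t}{x}$, and the exponent-vector phrasing is an equivalent restatement.
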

\begin{proof}
Since $x$ divides $t$, $x$ also divides $t_2$ and hence $\frac{t_2}{x},\frac{t}{x}$ are valid terms. We have, $t_1\Big(\frac{t}{x}\Big)=\frac{t_2}{x}$. Thus, $t_1\Big|\frac{t_2}{x}$. 
\end{proof}
In other words, the above lemma states that if a term $t_1$  divides $t_2$ and $t_1 \neq t_2$, then there exists a child of $t_2$, say $t_3$ such that $t_1$ divides $t_3$.


\subsection{BBD is in NP}
We ask the following question: When does a set of terms be a border with respect to an order ideal. It turns out that if the terms in $B$ obey some conditions then there exists an order ideal such that $B$ is it's border. 
\par Let $B \subset \mathbb{T}^n$ be a finite set of terms. Let $B'$ be a subset of $B$ such that every term $t$ in $B'$ obeys the following conditions:\\
1) For indeterminates $y,x$ such that $x|t$ and $y \neq x$, atleast one of $ty,\frac{ty}{x},\frac{t}{x}$ is in $B$.\\
2) There exists an indeterminate $x$ such that $x|t$ and $\frac{t}{x} \notin B$.\\ 
3) Let $t',t''$ be terms such that $t'|t'',t''|t$ and $t''$ is the parent of $t'$. If $t' \in B$ then $t''$ is in $B$.\\  
If $B=B'$ then we say that ``$B$ satisfies the three conditions" else we say that ``$B$ does not satisfy the three conditions". We will later prove that the three conditions mentioned before are sufficient and necessary for the existence of an order ideal such that $B$ is it's border. Before that we state an equivalent formulation of third condition.\\
For a term $t$ in $B$ consider the following set:\\
$$S_{t}=\Big\{t'' \in \mathbb{T}^n \Big|\ t''|t \mbox{ and } \exists \mbox{ a term } t' \in B \mbox{ such that } t'|t''\}$$
\begin{lemma}
\label{thirdcond}
All the terms in $B$ obey the third condition if and only if $S_{t} \subset B$ for all $t \in B$.
\end{lemma}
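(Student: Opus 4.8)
The plan is to prove both directions of the biconditional by unwinding the two definitions. The key observation is that the third condition, as stated, is a statement about \emph{consecutive} terms in the divisibility order (parent--child pairs), whereas the set $S_t$ encodes divisibility at arbitrary ``distance''. So the heart of the argument is a transitivity/induction on $\deg(t'') - \deg(t')$ along chains of parents.

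First I would do the easy direction: assume $S_t \subset B$ for all $t \in B$, and verify the third condition. Take $t \in B$ and terms $t', t''$ with $t' \mid t''$, $t'' \mid t$, $t''$ the parent of $t'$, and $t' \in B$. Then $t''$ witnesses membership in $S_t$ (with the inner witness being $t'$ itself), so $t'' \in S_t \subset B$, which is exactly the conclusion of the third condition. This direction needs no induction.

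For the converse, assume every term in $B$ obeys the third condition and fix $t \in B$; I must show $S_t \subset B$. Pick $t'' \in S_t$, so $t'' \mid t$ and there is some $t' \in B$ with $t' \mid t''$. I would argue by induction on $k = \deg(t'') - \deg(t') \geq 0$. If $k = 0$ then $t'' = t'\in B$ and we are done. If $k \geq 1$, then since $t' \mid t''$ and $t' \neq t''$, Lemma~\ref{fact1} (in the ``child'' reformulation stated right after it) gives a child $t_3$ of $t''$ with $t' \mid t_3$; note $t_3 \mid t'' \mid t$ and $\deg(t_3) - \deg(t') = k - 1$, so $t_3 \in S_t$, and by the induction hypothesis $t_3 \in B$. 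Now $t_3 \in B$, $t_3 \mid t''$, $t'' \mid t$, and $t''$ is the parent of $t_3$ (it is a parent since $t_3$ is its child, and it is \emph{the} parent along this chain); so the third condition applied to $t \in B$ yields $t'' \in B$. This completes the induction and the proof.

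The main subtlety I anticipate is purely bookkeeping: making sure the parent/child relationship needed to invoke the third condition is literally the one produced by Lemma~\ref{fact1}, i.e.\ that $t_3 y = t''$ for a single indeterminate $y$, so that $t''$ really is ``the parent of $t_3$'' in the sense the condition requires. This is immediate from the definition of children, so there is no real obstacle; the only thing to be careful about is phrasing the induction on the degree gap rather than trying to argue directly, since the third condition only ever relates terms one step apart in the divisibility lattice.
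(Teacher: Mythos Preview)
Your proof is correct and follows essentially the same approach as the paper: both directions are handled identically, and in the nontrivial direction both arguments use Lemma~\ref{fact1} to drop from $t''$ to a child, then invoke the third condition to climb back up. The only cosmetic difference is that the paper packages the argument as a minimal-counterexample contradiction (pick $t''$ minimal in $S_t\setminus B$ and show its child is in $B$), whereas you phrase it as an explicit induction on $\deg(t'')-\deg(t')$; these are dual formulations of the same idea.
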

\begin{proof}
If for all $t \in B$, $S_{t} \subset B$ then $B$ satisfies the third condition.
\par Assume all terms in $B$ obey the third condition. Let $t$ be a term in $B$ and let ${S'_{t}}$ be the subset of $S_{t}$ such that it contains all the terms in $S_{t}$ and not in $B$. If $S'_t=\emptyset$ then $S_t \subset B$. Hence assume that $S'_t \neq \emptyset$. Let $t''$ be a term in $S'_{t}$ such that no term in $S'_t$ divides $t''$. Since $t'' \in S_t$, there exists a term $t_1$ such that $t_1|t''$ and $t_1 \in B$. From lemma \ref{fact1}, $t_1|t'$ where $t' \in k(t'')$. Since $t_1|t',t'|t$ and $t_1 \in B$, we have $t' \in S_t$. By the choice of $t''$, $t' \in S'_t$ which means $t' \in B$. We have a situation where there are three terms $t,t',t''$ such that \textit{(i)} $t'|t'',t''|t$, \textit{(ii)} $t,t' \in B$, $t'' \notin B$ and \textit{(iii)} $t'' \in p(t')$. But this contradicts the fact that all the terms in $B$ satisfy the third condition.  
\end{proof}  
\noindent From the above lemma, for a term $t \in B$ the third condition can be rephrased as follows: \\
3a) For terms $t',t''$ such that $t' \in B,\ t'|t''$ and $t''|t$ then $t''$ is in $B$. \\

\noindent We now give the necessary and sufficient conditions for $B$ to be the border of an order ideal $\oi$. 

\begin{theorem}
There exists an order ideal $\oi$ such that $\boi=B$ if and only if $B$ satsifies all the conditions.
\end{theorem}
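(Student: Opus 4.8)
The plan is to prove both directions by characterizing exactly which terms belong to $\oi$ given a candidate border $B$. For the forward direction, suppose $\oi$ is an order ideal with $\boi = B$. I would verify conditions (1), (2), (3) one at a time for an arbitrary $t \in B$. Condition (2) is immediate: since $t \in \boi$, by definition $t = x \cdot t'$ for some indeterminate $x$ and some $t' \in \oi$, so $\frac{t}{x} = t' \notin \boi$ hence $\frac{t}{x} \notin B$. For condition (3), if $t', t''$ satisfy $t'|t''$, $t''|t$, $t'' \in p(t')$ and $t' \in B = \boi$, then since $t' \notin \oi$ and $\oi$ is closed under divisors, no multiple of $t'$ lies in $\oi$; in particular $t'' \notin \oi$. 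But $t'' | t$ and $t \in \overline{\boi} = \oi \cup \boi$ (which is an order ideal), so every divisor of $t$ lies in $\oi \cup \boi$, forcing $t'' \in \boi = B$. For condition (1), take $x | t$ with $y \neq x$; using $\frac{t}{x} \in \oi$, consider the term $\frac{ty}{x}$. If $\frac{ty}{x} \in \oi$ we are done via $\frac{t}{x} \in \oi \subset B$ (or we argue $ty \in \mathbb{T}_1^n \cdot \oi$); otherwise $\frac{ty}{x} \in \boi$ or $\frac{ty}{x}$ is "too high", and in the remaining case $ty = x \cdot \frac{ty}{x}$ with $\frac{ty}{x} \in \oi$ gives $ty \in \mathbb{T}_1^n \cdot \oi$, so $ty \in \oi \cup \boi$; a short case analysis on whether these terms lie in $\oi$ or $\boi$ shows at least one of $ty, \frac{ty}{x}, \frac{t}{x}$ is in $B$.

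For the converse, assume $B$ satisfies all three conditions; I must construct $\oi$. The natural candidate is
$$\oi = \{\, t \in \mathbb{T}^n \mid t \notin B \text{ and every } t' \text{ with } t' | t \text{ satisfies } t' \notin B \,\} \cup \{\, t : t | b \text{ for some } b \in B,\ t \neq b,\ \dots \}$$
— more cleanly, let $\oi$ be the set of all divisors of terms in $B$ that do not themselves have a divisor in $B$; equivalently, $\oi = \{t : t \text{ divides some } b \in B\} \setminus \{t : t' | t \text{ for some } t' \in B\}$. One checks $\oi$ is an order ideal (closed under divisors, by construction — if $s | t \in \oi$ then $s$ divides the same $b$ and $s$ has no divisor in $B$ since $t$ doesn't), is finite, and is nonempty (condition (2) guarantees each $b \in B$ has a child not in $B$; iterating downward via condition (3a) / Lemma \ref{thirdcond} one reaches a divisor with no divisor in $B$). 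Then I would show $\boi = B$ by proving both inclusions: $\boi \subseteq B$ uses conditions (2) and (3) to show any $t = x t'$ with $t' \in \oi$ and $t \notin \oi$ must actually lie in $B$ (it has a divisor in $B$ by non-membership in $\oi$, and condition (3a)/Lemma \ref{thirdcond} promotes that divisor up to $t$ itself); and $B \subseteq \boi$ uses condition (1): given $b \in B$, pick via condition (2) an $x | b$ with $\frac{b}{x} \notin B$, then argue $\frac{b}{x} \in \oi$ (it has no divisor in $B$ — here condition (1) or a minimality argument is needed to rule out a divisor of $\frac{b}{x}$ lying in $B$ while $b$'s structure is controlled) and $b \notin \oi$ since $b \in B$, so $b \in \boi$.

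The main obstacle I anticipate is the $B \subseteq \boi$ inclusion, specifically showing that the chosen child $\frac{b}{x}$ genuinely lands in $\oi$ rather than merely avoiding $B$: one must rule out that some proper divisor of $\frac{b}{x}$ lies in $B$. This is exactly where condition (1) does real work — if a divisor $s \in B$ of $\frac{b}{x}$ existed, then by Lemma \ref{fact1} and Lemma \ref{thirdcond} ($S_b \subseteq B$) one would force an intermediate term between $s$ and $b$ into $B$ in a way that contradicts the choice of $x$ (or one iterates the argument over all choices of $x | b$ with $\frac{b}{x} \notin B$, using condition (1) to show at least one such choice works). Making this selection argument precise — possibly by choosing $x$ so that $\frac{b}{x}$ is minimal in some sense, or by a direct contradiction using all three conditions simultaneously — is the delicate step; the rest is bookkeeping with divisibility and the definitions of $k(\cdot)$, $p(\cdot)$, $\boi$.
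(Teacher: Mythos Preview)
Your forward direction is essentially correct, though the write-up for condition (1) is garbled (``$\frac{t}{x}\in\oi\subset B$'' is not what you mean). The clean argument is by contradiction: if $ty,\frac{ty}{x},\frac{t}{x}$ are all outside $B=\boi$, then since $\frac{t}{x}\mid t$ and $\overline{\boi}$ is an order ideal, $\frac{t}{x}\in\oi$; hence $\frac{ty}{x}\in\mathbb{T}_1^n\cdot\oi\setminus\boi=\oi$; hence $ty\in\oi$; hence $t\in\oi$, contradicting $t\in\boi$.

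The real problem is in the converse, where you have the roles of the conditions exactly backwards. The inclusion $B\subseteq\boi$ is the \emph{easy} one: given $b\in B$, condition (2) supplies $x\mid b$ with $\frac{b}{x}\notin B$; then $\frac{b}{x}\in\oi$ because $\frac{b}{x}$ divides $b\in B$ and, if some $s\in B$ divided $\frac{b}{x}$, condition (3a) applied to $b$ would force $\frac{b}{x}\in S_b\subseteq B$. No use of condition (1) is needed here, so the ``main obstacle'' you anticipate dissolves immediately.

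The genuine gap is in your argument for $\boi\subseteq B$. You write that $t=xt'$ with $t'\in\oi$, $t\notin\oi$ ``has a divisor in $B$ by non-membership in $\oi$'', but with your definition $\oi=\{s:s\mid b\text{ for some }b\in B\}\setminus\{s:\exists\,s'\in B,\ s'\mid s\}$, non-membership in $\oi$ can equally well mean that $t$ divides \emph{no} element of $B$. And even when $t$ does have a divisor $s\in B$, condition (3a) only promotes $s$ up to $t$ if $t$ itself divides some element of $B$, which is precisely what is unknown. Concretely, take $B=\{x^2,xy\}$ in $\mathbb{F}[x,y]$: conditions (2) and (3) hold, your $\oi$ is $\{1,x,y\}$, and $y^2\in\boi\setminus B$. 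So conditions (2) and (3) alone cannot give $\boi\subseteq B$.

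This inclusion is exactly where condition (1) does its work in the paper's proof. Given $t_1=tx\in\boi$ with $t\in\oi$, one picks a \emph{minimal} $t_2\in B$ divisible by $t$ (minimal in the sense that no proper divisor of $t_2$ that is still a multiple of $t$ lies in $B$), finds via Lemma~\ref{fact1} an indeterminate $x_1\mid t_2$ with $t\mid\frac{t_2}{x_1}$, and then splits on whether $x_1=x$. If $x_1=x$ then $t_1\mid t_2$ and one is done. If $x_1\neq x$, condition (1) applied to $t_2$ with the pair $(x,x_1)$ forces one of $t_2x,\ \frac{t_2x}{x_1},\ \frac{t_2}{x_1}$ into $B$; minimality of $t_2$ rules out $\frac{t_2}{x_1}$, and the remaining two are both multiples of $t_1=tx$, so $t_1$ divides an element of $B$ and hence (since $t_1\notin\oi$) lies in $B$. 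You should rework your converse along these lines.
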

\begin{proof}
Let $\oi$ be an order ideal such that $B$ is it's border \textit{i.e.} $B=\boi$. Assume that $B$ does not satisfy the three conditions which means there exists a term $t \in B$ which does not obey all the three conditions. Consider the following cases:\\
\textit{Case (i)} Suppose $t$ does not obey the first condition. There exists indeterminates $x,y$ such that $x|t,y \neq x$ and $t_1=ty \notin B,t_2=\frac{ty}{x} \notin B,t_3=\frac{t}{x} \notin B$. Since $t \in \boi$, $t_3$ is in $\oi$ which implies that $t_3y=t_2 \in \oi$ since $t_2 \notin \boi$. Similarly, $t_2x=t_1 \in \oi$. But $\oi$ is an order ideal and since $t|t_1$, $t$ should be in $\oi$ and hence a contradiction.\\ 
\textit{Case (ii)} Suppose $t$ does not obey the second condition. Then $k(t) \subset B=\boi$. From lemma \ref{notallinbo}, $t \notin \boi$ which is a contradiction.\\
\textit{Case (iii)} Suppose $t$ does not obey the third condition. There exists two terms $t',t''$ such that $t' \in B,t'' \in \oi$ and $t'|t'',t''|t,t'' \in p(t')$. Since $\oi$  is an order ideal, $t'' \in \oi$ implies that $t' \in \oi$, a contradiction.\\
Hence $B$ has to satisfy the three conditions for it to be the border of the order ideal $\oi$.
\par Assume that $B$ satisfies all the three conditions. Now, consider the following set:
$$\oi=\Big\{t \in \mathbb{T}^n \Big| \mbox{ there exists a term $t' \in B$ such that $t|t'$ and $t \notin B$ }\Big\}$$

\noindent \textbf{Claim. }$\oi$ is an order ideal.\\
\textit{Proof. }Consider a term $t \in \oi$. Let $t'$ be a term such that $t'|t$. By the construction of $\oi$, there exists a term $t'' \in B$ such that $t|t''$ and this implies that $t'|t''$. Now, if 
$t'$ was in $B$ then from lemma \ref{thirdcond}, $t''$ would violate the third condition and hence $t' \notin B$. Hence, $t' \in \oi$.\\

\noindent \textbf{Claim. }$B=\boi$.\\
\textit{Proof. }We will first show that $B \subset \boi$. Consider a term $t \in B$ and from the second condition there exists a term $t' \notin B$ such that $t=t'x$ for some indeterminate $x$. This implies that $t' \in \oi$ and hence, $t'x=t \in \boi$ since $t \notin \oi$. It remains to show that $\boi \subset B$. Let $t_1 \in \boi$ and hence there exists a term $t \in \oi$ such that $tx=t_1 \in \boi$ for an indeterminate $x$. From the construction of $\oi$, $t$ divides atleast one term in $B$. Let $t_2 \in B$ such that $t|t_2$ and if there is a term $t'$ such that $t|t'$ and $t'|t_2$ then $t' \in \oi$. Since $t|t_2$, from lemma \ref{fact1} there exists a child of $t_2$ such that $t$ divides that term. Let $x_1$ be an indeterminate such that $x_1|t_2$ and $t \Big| \frac{t_2}{x_1}$. Consider the following two cases:\\
\textit{Case (i)} $x_1=x$: In this case $t_1|t_2$ and hence $t_1 \in B$ since $t_1 \notin \oi$.\\
\textit{Case (ii)} $x_1 \neq x$: From the first condition, one of $t_2x,\frac{t_2x}{x_1},\frac{t_2}{x_1}$ has to be in $B$. Assume that $ \frac{t_2}{x_1} \in B$. We have a term ${t_2}''=\frac{t_2}{x_1}$ such that $t|{t_2}'',{t_2}''|{t_2}$ and ${t_2}'' \in B$ which contradicts the choice of $t_2$. Hence $\frac{t_2}{x_1} \notin B$ which means $\frac{t_2x}{x_1}$ or $t_2x$ is in $B$. Now $t \Big| \Big( \frac{t_2}{x_1} \Big)$ and hence $tx \Big| \Big( \frac{t_2x}{x_1} \Big)$, $tx \Big| t_2x$ which implies that $tx=t_1$ divides a term in $B$. This further implies that $t_1 \in \oi$ or $t_1 \in B$. Since $t_1 \in \boi, t_1 \notin \oi$ and thus $t_1 \in B$.
\end{proof}

\noindent Let $B$ be a set of terms and let $m$ be the size of binary representation of $B$.\\
For a term $t \in B$ and a fixed pair of indeterminates $(y,x)$, we can search whether $\frac{ty}{x},\frac{t}{x},ty$ are in $B$ in $\OO{m}\footnote{Big-O notation}$ time. And since there are $|B| (\leq m)$ terms and $N^2$ pairs of indeterminates ($N$ is the number of indeterminates), condition 1 can be checked in $\OO{m^2N^2}$ time.\\
For every term $t$, atmost $N$ children are possible. In $\OO{Nm}$ time it can be checked whether all the children of the term $t$ are in $B$ or not. Since there are $|B|$ terms, condition 2 can be checked in $\OO{Nm^2}$ time.\\
Every term has exactly $N$ parents. For terms $t',t' \in B$ fixed such that $t'|t$, it takes $\OO{Nm}$ time to check whether all the parents of $t'$ dividing $t$ are in $B$. Since there are $|B|^2$ such terms possible, condition 3 can be checked in $\OO{Nm^3}$ time.\\
Hence, it can be checked in time polynomial in $N$ and $m$ (binary size of $B$) whether $B$ is the border of some order ideal.\\ 

\noindent Let $B$ be the border of some order ideal $\oi$ \textit{i.e.} $B=\boi$ and let $\mathcal{F}$ be a set of polynomials such that the support of each polynomial in $\mathcal{F}$ contains exactly one term from $B$ and $|B|=|\mathcal{F}|$. We state a lemma that will be helpful in checking whether $\mathcal{F}$ is a $\oi$-border prebasis.

\begin{lemma}
 $\mathcal{F}$ is a $\oi$-border prebasis if and only if every term in $\mathrm{Supp}(\mathcal{F}) \backslash B$ divides a term in $B$.
\end{lemma}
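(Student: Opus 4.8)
The plan is to prove both directions by unwinding the definition of an $\oi$-border prebasis. Recall $\oi = \{t_1,\ldots,t_\mu\}$ and $\partial\oi = B = \{b_1,\ldots,b_\nu\}$, and $|\mathcal{F}| = |B|$ with each polynomial in $\mathcal{F}$ containing exactly one term of $B$ in its support. Since the map $f \mapsto (\text{the unique term of }B\text{ in }\mathrm{Supp}(f))$ is a map from an $\nu$-element set to $B$ that is forced to be injective (if two polynomials shared the same border term, some border term would be missed by a counting argument, but actually we need it surjective — see below), we may after reindexing write $\mathcal{F} = \{f_1,\ldots,f_\nu\}$ where $b_j \in \mathrm{Supp}(f_j)$.

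For the forward direction, suppose $\mathcal{F}$ is an $\oi$-border prebasis. Then by definition each $f_j$ has the form $b_j - \sum_{i=1}^\mu \alpha_{ij} t_i$, so $\mathrm{Supp}(f_j) \subseteq \{b_j\} \cup \oi$. Hence $\mathrm{Supp}(\mathcal{F}) \subseteq B \cup \oi$, and therefore $\mathrm{Supp}(\mathcal{F}) \backslash B \subseteq \oi$. Now I invoke the fact, stated in the excerpt, that $\overline{\partial\oi} = \oi \cup \partial\oi$ is an order ideal and that $\partial\oi \subseteq \mathbb{T}_1^n \cdot \oi$: every element of $\oi$ is a divisor of some border term, because $\oi$ is finite, so picking any $t \in \oi$ and repeatedly multiplying by indeterminates we eventually leave $\oi$ and land in $B$ (this uses that $\oi$ is an order ideal, hence $\mathbb{T}_1^n\cdot\oi \subseteq \oi \cup \partial\oi$, so the first step out of $\oi$ lands in $\partial\oi = B$). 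Thus every term in $\mathrm{Supp}(\mathcal{F})\backslash B$, being in $\oi$, divides a term in $B$.

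For the converse, suppose every term in $\mathrm{Supp}(\mathcal{F}) \backslash B$ divides a term in $B$. I claim that every such term lies in $\oi$. Indeed, if $t \mid b$ for some $b \in B = \partial\oi \subseteq \oi \cup \partial\oi$ and $t \notin B$, then since $\oi \cup \partial\oi$ is an order ideal and $b$ lies in it, $t$ lies in $\oi \cup \partial\oi$; as $t \notin \partial\oi = B$, we get $t \in \oi$. Hence $\mathrm{Supp}(\mathcal{F})\backslash B \subseteq \oi$. Combined with the hypothesis that each $f_j$ contains exactly one term of $B$ and $|\mathcal{F}| = |B| = \nu$, write $f_j = \gamma_j b_j - \sum_{i} \alpha_{ij} t_i$ with $\gamma_j \in \mathbb{F}\backslash\{0\}$ (the coefficient of its unique border term) and $t_i$ ranging over $\oi$; rescaling $f_j$ by $\gamma_j^{-1}$ — which does not change $\langle\mathcal{F}\rangle$ or the support structure — puts $\mathcal{F}$ in exactly the form required of an $\oi$-border prebasis. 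It remains only to confirm that the $\nu$ distinct border terms appearing across $\mathcal{F}$ exhaust all of $B$: this is where $|\mathcal{F}| = |B|$ is used, together with injectivity of $f_j \mapsto b_j$, which follows because if two polynomials carried the same border term then fewer than $\nu$ border terms would be covered, yet we need one polynomial per border term to even have a prebasis — more carefully, the $\nu$ polynomials inject into the $\nu$-element set $B$, hence the map is a bijection.

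The main obstacle I anticipate is bookkeeping around the indexing/bijection between $\mathcal{F}$ and $B$: the definition of prebasis demands exactly one polynomial $g_j$ per border term $b_j$, so I must argue the "exactly one term from $B$, and $|B| = |\mathcal{F}|$" hypotheses together force a bijection, and I must handle the harmless rescaling of each polynomial so its border-term coefficient is $1$. The genuinely mathematical content — that "divides a term in $B$" is equivalent to "lies in $\oi$" for a term outside $B$ — is short once one uses that $\oi \cup \partial\oi$ is an order ideal, a fact already granted in the excerpt.
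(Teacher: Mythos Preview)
Your approach is essentially the paper's: both directions reduce to showing that, for a term $t \notin B$, the conditions ``$t$ divides some element of $B$'' and ``$t \in \oi$'' are equivalent, using finiteness of $\oi$ for the forward direction and the fact that $\overline{\partial\oi}$ is an order ideal for the converse. The paper's forward argument multiplies $t$ by powers of a single fixed indeterminate until it first leaves $\oi$; your phrasing is slightly looser but carries the same content.

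You go beyond the paper in worrying about rescaling the border-term coefficient to $1$ and about whether the assignment $f \mapsto (\text{its unique border term})$ is a bijection onto $B$. Your bijection argument, however, is circular: the sentence ``the $\nu$ polynomials inject into the $\nu$-element set $B$, hence the map is a bijection'' assumes exactly what must be shown, and in fact the stated hypotheses (each polynomial contains exactly one term of $B$, and $|\mathcal{F}|=|B|$) do not exclude two distinct polynomials sharing the same border term. The paper's own proof simply does not address this point either, so this is a wrinkle in the lemma's statement rather than a defect in your strategy; in the paper's intended application the supports of distinct polynomials in $\mathcal{F}$ are pairwise disjoint, which closes the gap.
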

\begin{proof}
Let $\mathcal{F}$ be a $\oi$-border prebasis. Then $B'=\mathrm{Supp}(\mathcal{F} \backslash B) \subset \oi$. Let $t \in B'$ \textit{i.e.} $t \in \oi$. For an indeterminate $x$, consider the sequence of terms $t,tx,tx^2, \ldots$. Not all the terms in the sequence can be in $\oi$ since $\oi$ is a finite set of terms. Let $i$ be the least number such that $tx^i \notin \oi$ and hence $tx^i \in \boi$. Thus, $t$ divides a term in $\boi$.
\par Let $t$ be a term in $B'$ such that $t$ divides a term $t' \in B$. As mentioned before, $\overline{\boi}$ is an order ideal and hence $t \in \overline{\boi}$. Since, $t \notin \boi$, $t$ has to be in $\oi$. Thus, $B' \subset \oi$. Hence, $|B|=|\mathcal{F}|$ and support of each polynomial in $\mathcal{F}$ contains exactly one term in $B$ and the rest of the terms are in $\oi$. Thus, $\mathcal{F}$ is a $\oi$-border prebasis.
\end{proof}

We now give the proof that BBD is in NP.
\begin{theorem}
BBD is in NP.
\end{theorem}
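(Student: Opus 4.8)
The plan is to exhibit a polynomial-size certificate for a ``yes'' instance of BBD and to verify it in polynomial time using the machinery already developed in this section. Suppose $\mathcal{F}$ is a $\oi$-border basis of $\ideal=\langle\mathcal{F}\rangle$ for some order ideal $\oi$. The natural certificate is the set of border terms $B$, one distinguished term selected from the support of each polynomial of $\mathcal{F}$; by the definition of a border prebasis, $B$ has size at most $|\mathcal{F}|$ and hence is polynomial in the input size. The verifier then performs three checks in sequence, corresponding to the three structural conditions a border basis must satisfy.

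First, I would check that $B$ is the border of some order ideal. By the theorem characterizing borders (the one asserting $\boi=B$ for some order ideal $\oi$ if and only if $B$ satisfies all three conditions), this is exactly the assertion that $B$ satisfies conditions (1), (2), (3); and the discussion immediately following that theorem shows each of the three conditions can be checked in time polynomial in $N$ and $m$ (namely $\OO{m^2N^2}$, $\OO{Nm^2}$, $\OO{Nm^3}$ respectively). If this check passes we have recovered the unique order ideal $\oi$ with $\boi=B$, namely the set of terms dividing some element of $B$ but not lying in $B$ itself. Second, I would check that $\mathcal{F}$ is a $\oi$-border prebasis: by the preceding lemma this amounts to verifying that $|B|=|\mathcal{F}|$, that each polynomial of $\mathcal{F}$ has exactly one term in $B$, and that every term of $\mathrm{Supp}(\mathcal{F})\setminus B$ divides some term of $B$ — all clearly polynomial-time tests on the sparse representation. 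Third, I would apply the Buchberger criterion for border bases: for each pair of neighboring prebasis polynomials $g_k,g_l$, form the S-polynomial $S(g_k,g_l)$ and test whether it lies in the $\mathbb{F}$-span of $\{g_1,\dots,g_\nu\}$ with constant coefficients, i.e.\ whether the linear system $S(g_k,g_l)=\sum_j c_j g_j$ over $\mathbb{F}$ is solvable. There are at most $N\nu$ neighbor pairs, and each such linear-algebra feasibility test over $\mathbb{F}$ runs in polynomial time (Gaussian elimination over the support, which stays polynomially bounded). By the Buchberger criterion theorem, all these checks pass precisely when $\mathcal{F}$ is a $\oi$-border basis of $\langle\mathcal{F}\rangle$.

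I expect the main obstacle — or rather the point needing the most care — to be the Buchberger-criterion step, specifically arguing that the linear-algebra involved stays polynomially bounded: one must observe that every term appearing in any $S(g_k,g_l)$ or in any $x_i g_j$ lies in the (polynomially bounded) set of parents of terms in $\mathrm{Supp}(\mathcal{F})$, so the ambient vector space in which we test membership has polynomial dimension, and hence solvability of the system $S(g_k,g_l)=\sum_j c_jg_j$ is decidable in polynomial time. Once that bookkeeping is in place, the conjunction of the three checks is a polynomial-time deterministic verification of the certificate $B$, which establishes $\mathrm{BBD}\in\mathrm{NP}$.
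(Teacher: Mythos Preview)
Your proposal is correct and follows essentially the same approach as the paper: certificate $B$ of distinguished border terms, then verify in polynomial time that (i) $B$ satisfies the three border conditions, (ii) $\mathcal{F}$ is an $\oi$-border prebasis via the divisibility lemma, and (iii) the Buchberger criterion holds. You actually supply more detail than the paper on step (iii), correctly noting that the linear-algebra feasibility lives in a polynomially bounded ambient space; the only quibble is that your bound of ``at most $N\nu$ neighbor pairs'' undercounts the across-neighbors (it should be $O(N^2\nu)$ or simply $O(\nu^2)$), but this does not affect the polynomial-time conclusion.
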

\begin{proof}
Let $\mathcal{F}$ be a set of input polynomials to the BBD instance such that $\ideal= \langle \mathcal{F} \rangle$. Assume that a set $B=\mathrm{Supp}(\mathcal{F})$ containing exactly one term from each polynomial in $\mathcal{F}$ and $|B|=|\mathcal{F}|$, is given as a ``YES" certificate\footnote{A ``YES" certificate is a proof to show that $\mathcal{F}$ corresponds to an "yes" answer in BBD \textit{i.e.} $\mathcal{F}$ is a border basis of $\ideal$ with respect to some order ideal.} for $\mathcal{F}$ such that $B=\boi$ for some order ideal $\oi$ and $\mathcal{F}$ is a $\oi$-border basis. Let the binary size of representation of $\mathcal{F},B$ be denoted by $m_{\mathcal{F}},m_{B}$ respectively. This certificate can be verified in polynomial time as follows:\\
We have seen that it can be verified in time polynomial in $m_{B}$ and $N$ whether $B$ is the border of some order ideal $\oi$. In order to check whether $\mathcal{F}$ is a $\oi$-border prebasis, from the previous claim we need to check whether each term in $\mathrm{Supp}(\mathcal{F}) \backslash B$ divides a term in $B$. This can be implemented in $\OO{m_{\mathcal{F}}m_{B}}$ time. And in time polynomial in $m_{\mathcal{F}}$, it can be verified whether $\mathcal{F}$ satisfies the Buchberger criterion. Since a ``YES" certificate for the BBD instance can be verified in polynomial time, BBD is in NP.
\end{proof}    
 
\noindent We now give a polynomial time reduction from 3,4-SAT to BBD.

\subsection{Reduction}
We are now going to construct a set of polynomials $\mathcal{F}$ as follows:\\

\begin{itemize}
\item With respect to variable $X_i$ for $i \in \{1, \ldots ,n\}$, associate a polynomial $$t_{X_i}+t_{\overline{X}_i}.$$ 
We shall refer to such polynomials as \textbf{$v$-polynomials} (variable polynomials)\\
$$F_v=\{t_{X_i}+t_{\overline{X}_i}\ |\ i = 1, \ldots ,n \}.$$ i.e $F_v$ is a set of $v$-polynomials.

\item With respect to each clause $C_l$ in $\mathcal{I}$ for $l \in \{1, \ldots ,m\}$, we associate a polynomial. Without loss of generality assume that $C_l=(X_i \vee \overline{X}_j \vee X_k)$, for $i,j,k \in \{1, \ldots ,n\}$. The polynomial associated with $C_l$ is $$\frac{t_{X_i}x_{c_l}}{c_l}+\frac{t_{\overline{X}_j}x_{c_l}}{c_l}+\frac{t_{X_k}x_{c_l}}{c_l}.$$\\
We will refer to the above set of polynomials as \textbf{$c$-polynomials} (clause polynomials).\\
$$F_c=\Bigg \{ \frac{t_{X_i}x_{c_l}}{c_l}+\frac{t_{X_j}x_{c_l}}{c_l}+\frac{t_{X_k}x_{c_l}}{c_l} \Bigg|\ C_l=(X_i \vee X_j \vee X_k) \mbox{ is a clause in $\mathcal{I}$}\Bigg \}$$  

\item The third set of polynomials are those that contain just one term in their support:\\
$$F_1=\{t |\ \mathrm{deg}(t) = 8\},\ F_2=\displaystyle\bigcup_{i=1}^{n} (R_i \backslash K_i),\ \mathcal{F}'=F_1 \cup F_2. $$
We refer to the set of polynomials in $\mathcal{F}'$ as \textbf{$t$-polynomials} (polynomials containing just one term).
\end{itemize}
From the above set of polynomials, we construct the system of polynomials $\mathcal{F}$ which is an instance to the BBD problem:
$$\mathcal{F}=F_v \cup F_c \cup \mathcal{F}'.$$
Note that all the terms in $\mathrm{Supp}(\mathcal{F})$ have total degree either 7 or 8. Also, for any two polynomials $f,g \in \mathcal{F}$ we have $\mathrm{Supp}(f) \cap \mathrm{Supp}(g)= \phi$.
\par The construction of each polynomial in $F_c,F_v$ can be done in time polynomial in $n,m$. So $F_c,F_v$ can be constructed in time polynomial in $n$ and $m$ since $|F_c|=m$ and $|F_v|=n$. $F_1,F_2$ can be computed in time polynomial in $|F_1|$ and $|F_2|$. Also $|F_2|$ is bounded above by $\sum_{i=1}^{n}|R_i|\ (\leq \sum_{i=1}^{n}4N \leq 4nN)$ and $|F_1| \leq \left( \begin{array}{c} 8+N \\ 8 \end{array} \right) < N^8$. Hence $F_1,F_2$ can be constructed in time polynomial in $N$. Since $F_c,F_v,F_1$ and $F_2$ can be constructed in time polynomial in $N$, the reduction can be performed in polynomial time. \\
 
\noindent We state a theorem that will be helpful for proving the correctness of reduction in the next section.

\begin{theorem}
\label{varclause}
Let $\mathcal{F}$ be a $\mathcal{O}$-border basis. If $X_i$ appears in $C_l$ for $i \in \{1, \ldots ,n\},\ l \in \{1, \ldots ,m\}$ then both $t_{X_i}$ and $\frac{t_{X_i}x_{c_l}}{c_l}$ cannot be in $\boi$. Similarly if $\overline{X}_i$ appears in $C_l$ for $i \in \{1, \ldots ,n\},\ l \in \{1, \ldots ,m\}$, then both $t_{\overline{X}_i}$ and $\frac{t_{\overline{X}_i}x_{c_l}}{c_l}$ cannot be in $\boi$.  
\end{theorem}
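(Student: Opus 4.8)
The plan is to argue by contradiction: suppose $X_i$ appears in $C_l$ yet both $t_{X_i}$ and $\frac{t_{X_i}x_{c_l}}{c_l}$ lie in $\boi$. Put $w=t_{X_i}x_{c_l}$. Since $X_i$ occurs in $C_l$ we have $c_l\mid t_{C_{x_i}}$, hence $c_l\mid t_{X_i}$, so $w$, $t_{X_i}=w/x_{c_l}$ and $\frac{t_{X_i}x_{c_l}}{c_l}=w/c_l$ are all legitimate terms, and the last two are children of $w$. Moreover $\mathrm{deg}(w)=\mathrm{deg}(t_{X_i})+1=8$, so $w\in F_1\subseteq\mathcal{F}'$. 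As $\mathcal{F}$ is an $\oi$-border basis it is in particular an $\oi$-border prebasis, so there is a bijection between $\mathcal{F}$ and $\boi$ under which each polynomial contributes exactly its unique border term; for a single-term polynomial that term is the monomial itself, and since the supports of distinct polynomials of $\mathcal{F}$ are disjoint, this forces every monomial of $F_1\cup F_2$ into $\boi$. In particular $w\in\boi$.

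The heart of the argument is the claim that \emph{every child of $w$ lies in $\boi$}. I would enumerate the children of $w$ by the indeterminate divided out. Dividing by $x_{c_l}$ gives $t_{X_i}$ and dividing by $c_l$ gives $\frac{t_{X_i}x_{c_l}}{c_l}$, both in $\boi$ by the hypothesis we want to refute. Each remaining child comes from dividing $w$ by $x_i$, by $\overline{x}_i$, by some $c_j$ with $j\neq l$ dividing $t_{C_{x_i}}$, or by $X$ when $X\mid t_{C_{x_i}}$. Since $w\in P_{X_i}\subseteq P_i$ and $R_i=k(P_i)$, all children of $w$ lie in $R_i$, so it suffices to check that each such remaining child avoids $K_i=K_{X_i}\cup K_{\overline X_i}\cup\{t_{X_i},t_{\overline X_i}\}$; then the child lies in $R_i\setminus K_i\subseteq F_2\subseteq\boi$. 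Granting the claim, Lemma~\ref{notallinbo} applied to $w$ yields $w\notin\boi$, contradicting $w\in\boi$; hence $t_{X_i}$ and $\frac{t_{X_i}x_{c_l}}{c_l}$ cannot both lie in $\boi$. The assertion for $\overline{X}_i$ follows verbatim, with $t_{X_i}$ replaced by $t_{\overline X_i}$ and the roles of $x_i,\overline{x}_i$ interchanged.

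I expect the only real (and modest) obstacle to be the membership check that the ``other'' children of $w$ avoid $K_i$, and this is exactly where the asymmetry between $t_{X_i}=x_i\overline{x}_i^{\,2}t_{C_{x_i}}$ and $t_{\overline X_i}=x_i^{\,2}\overline{x}_i t_{C_{x_i}}$ is used. Every element of $K_{X_i}$, and $t_{X_i}$ itself, has $x_i$-exponent $1$ and $\overline{x}_i$-exponent $2$, whereas every element of $K_{\overline X_i}$, and $t_{\overline X_i}$ itself, has $x_i$-exponent $2$ and $\overline{x}_i$-exponent $1$; dividing $w$ by $x_i$ or by $\overline{x}_i$ yields a term whose $x_i$- or $\overline{x}_i$-exponent matches neither pattern. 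Dividing $w$ by $c_j$ with $j\neq l$ leaves a term still divisible by both $c_l$ and $x_{c_l}$, while any element of $K_{X_i}\cup K_{\overline X_i}$ divisible by some $x_{c_{l'}}$ has had $c_{l'}$ cancelled, and $t_{X_i},t_{\overline X_i}$ are divisible by no variable $x_{c_\bullet}$ at all. Finally, dividing $w$ by $X$ lowers its $X$-exponent below the value $4-|P_i|$ carried by every element of $K_i$. I would package these three observations as a short sub-claim and settle it by comparing exponents; the remainder of the proof is bookkeeping with the definitions of $F_1,F_2,R_i,K_i$ and with total degrees.
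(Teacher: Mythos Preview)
Your proposal is correct and follows essentially the same route as the paper: set $w=t_{X_i}x_{c_l}\in F_1\subset\boi$, show that every child of $w$ lies in $\boi$ (the two ``special'' children by hypothesis, the rest because they sit in $R_i\setminus K_i\subset F_2\subset\boi$), and invoke Lemma~\ref{notallinbo} for the contradiction. The only difference is that the paper simply asserts $k(t_{X_i}x_{c_l})\setminus\{t_{X_i},\tfrac{t_{X_i}x_{c_l}}{c_l}\}\subset F_2$, whereas you spell out the exponent comparisons that justify this inclusion; your added detail is sound and in fact fills in what the paper leaves implicit.
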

\begin{proof}
Assume that $X_i$ appears in $C_l$. We have\\
$$k(t_{X_i}x_{c_l}) \cap \mathrm{Supp}(F_c \cup F_v)=\Big \{ t_{X_i},\frac{t_{X_i}x_{c_l}}{c_l}\Big \} \mbox{ and}$$ 
$$\ k(t_{X_i}x_{c_l}) \backslash \Big \{ t_{X_i},\frac{t_{X_i}x_{c_l}}{c_l}\Big \} \subset F_2 \enspace .$$
Since $F_2$ contains $t$-polynomials, every term in the support of $F_2$ has to be in $\boi$ and similarly all the terms in $F_1$ has to be in $\boi$. Hence,
$$t_{X_i}x_{c_l} \in \boi,\ k(t_{X_i}x_{c_l}) \backslash \Big \{ t_{X_i},\frac{t_{X_i}x_{c_l}}{c_l}\Big \} \subset \boi \enspace .$$
Now, both $t_{X_i},\frac{t_{X_i}x_{c_l}}{c_l}$ cannot be in $\boi$ without contradicting the lemma \ref{notallinbo}. Similarly, it can be argued that if $\overline{X}_i$ appears in $C_l$ then both $t_{\overline{X}_i}$ and $\frac{t_{\overline{X}_i}x_{c_l}}{c_l}$ cannot be in $\boi$.
\end{proof}

\subsection{Correctness of reduction} 
We now state the main theorem of this paper.
\begin{theorem}
3,4-SAT instance $\mathcal{I}$ is satisfiable if and only if $\mathcal{F}$ is a $\oi$-border basis with respect to some order ideal $\oi$.
\end{theorem}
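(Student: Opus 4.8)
The plan is to prove the two implications separately; the reverse implication is the shorter one, so I would dispatch it first. Assume $\mathcal{F}$ is an $\oi$-border basis for some order ideal $\oi$. Since a border basis is in particular a prebasis, each polynomial of $\mathcal{F}$ contributes exactly one term to $\boi$ (its border term) and has all its remaining support terms in $\oi$. Consequently the $v$-polynomial $t_{X_i}+t_{\overline{X}_i}$ has exactly one of $t_{X_i},t_{\overline{X}_i}$ in $\boi$; define an assignment $\phi$ by declaring $X_i$ true precisely when $t_{\overline{X}_i}\in\boi$ (equivalently $t_{X_i}\in\oi$). To see $\phi$ satisfies a clause $C_l$, consider its $c$-polynomial: exactly one of its three terms, say $\frac{t_{\ell}x_{c_l}}{c_l}$ for some literal $\ell$ of $C_l$, lies in $\boi$. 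If $\ell=X_i$ then, as $X_i$ occurs in $C_l$, Theorem~\ref{varclause} prevents $t_{X_i}$ from also lying in $\boi$; as $t_{X_i}$ is a support term of the prebasis $\mathcal{F}$ and is not in $\boi$, it must lie in $\oi$, so $t_{\overline{X}_i}\in\boi$, meaning $\phi(X_i)$ is true and $C_l$ is satisfied. The case $\ell=\overline{X}_i$ is symmetric and makes $\phi(X_i)$ false, again satisfying $C_l$. Hence $\mathcal{I}$ is satisfiable.

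For the forward direction, fix a satisfying assignment $\phi$ together with a choice, for each clause, of one literal it satisfies. I would then define a candidate border $B\subseteq\mathrm{Supp}(\mathcal{F})$ as follows: take the unique term of every $t$-polynomial in $F_1\cup F_2$; from each $v$-polynomial take $t_{X_i}$ when $\phi(X_i)$ is false and $t_{\overline{X}_i}$ when it is true; from each $c$-polynomial take the term $\frac{t_\ell x_{c_l}}{c_l}$ of its chosen satisfying literal $\ell$. Because the supports of the polynomials of $\mathcal{F}$ are pairwise disjoint, $B$ contains exactly one term from each polynomial and $|B|=|\mathcal{F}|$; also, by construction every degree-$7$ term of $B$ lies in $\bigcup_i R_i$. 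The first task is to verify that $B$ satisfies the three conditions, whereupon the characterization theorem for borders supplies an order ideal $\oi$ with $\boi=B$; since $F_1$ is the set of all degree-$8$ terms and every support term has degree $7$ or $8$, this $\oi$ is just $\mathbb{T}_{\le 7}^N\setminus B$. Conditions $1$ and $3$ are immediate, because a parent of a degree-$7$ term, as well as the terms $ty$ and $\frac{ty}{x}$ occurring in condition $1$ when $t$ has degree $7$ respectively $8$, all have degree $8$ and hence lie in $F_1\subseteq B$. Condition $2$ is equally immediate for the degree-$7$ terms of $B$: their children have degree $6$, and $B$ contains no term of degree below $7$.

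So the substance of the forward direction is condition $2$ for a degree-$8$ term $t\in B$, where I must exhibit a child of $t$ outside $B$. If $t\notin\bigcup_i P_i$, suppose for contradiction that every child of $t$ lies in $B$; then all children lie in $\bigcup_i R_i$, and by disjointness of the regions together with Lemma~\ref{regsamep} they all lie in a single region $R_i=k(P_i)$. Corollary~\ref{boundsamep} then gives $I(t)=|k(t)|\le|P_i|$, whereas Lemma~\ref{boundind}, applied to any child of $t$ (which lies in $k(P_i)$), forces $I(t)\ge|P_i|+2$ --- a contradiction. If instead $t\in P_i$, say $t=t_{X_i}x_{c_l}$ with $X_i$ occurring in $C_l$, then both $t_{X_i}$ and $\frac{t_{X_i}x_{c_l}}{c_l}$ are children of $t$, and the construction of $B$ leaves at least one of them in $\oi$: if $\phi(X_i)$ is true then $t_{X_i}\in\oi$, and if $\phi(X_i)$ is false then the literal $X_i$ is never a chosen satisfier, so $\frac{t_{X_i}x_{c_l}}{c_l}\in\oi$; either way that child lies outside $B$. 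The case $t=t_{\overline{X}_i}x_{c_l}$ is symmetric. Once $\boi=B$ is established, $\mathcal{F}$ is an $\oi$-border prebasis because every term of $\mathrm{Supp}(\mathcal{F})\setminus B$ has degree $7$ (all degree-$8$ terms lying in $F_1\subseteq B$) and therefore divides a degree-$8$ term of $F_1\subseteq B$. Finally, for the Buchberger criterion observe that every non-border support term of every polynomial of $\mathcal{F}$ has degree $7$; hence, for any neighboring pair $g_k,g_l$, the border terms cancel in $S(g_k,g_l)$ and what survives is an $\mathbb{F}$-linear combination of degree-$8$ monomials, each of which is itself a one-term polynomial in $F_1\subseteq\mathcal{F}$. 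Thus $S(g_k,g_l)$ is an $\mathbb{F}$-linear combination of elements of $\mathcal{F}$, the Buchberger criterion for border bases is met, and $\mathcal{F}$ is an $\oi$-border basis.

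The step I expect to be the main obstacle is the verification of condition $2$ for the degree-$8$ terms of $B$: this is the only place where the geometry of the regions $R_i$, the parent/children counting results (the earlier corollaries, Lemma~\ref{regsamep}, Lemma~\ref{boundind}) and the variable/clause gadget are genuinely needed, and it is exactly the point at which a carelessly chosen $B$ would fail to be the border of any order ideal. Everything else --- conditions $1$ and $3$, the prebasis property, and the Buchberger criterion --- falls out of the degree bookkeeping, namely that all support terms have degree $7$ or $8$ while every degree-$8$ monomial already occurs in $\mathcal{F}$ as a $t$-polynomial.
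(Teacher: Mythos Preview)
Your proposal is correct and follows essentially the same route as the paper: the same candidate border set, the same region/counting argument (Lemma~\ref{regsamep}, Corollary~\ref{boundsamep}, Lemma~\ref{boundind}) for the degree-$8$ case, and the same degree bookkeeping for the prebasis and Buchberger steps. The only cosmetic difference is that you package the verification that $B$ is a border via the ``three conditions'' characterization theorem, whereas the paper defines $\oi=\mathbb{T}_{\le 8}\setminus T$ directly and proves $T=\boi$ by hand; the substantive step (your condition~2 for degree-$8$ terms) coincides with the paper's proof of $T\subset\boi$.
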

\begin{proof}
Suppose $\mathcal{F}$ is an $\oi$-border basis of $\ideal$ with respect to order ideal $\oi$, we will construct an assignment to $\mathcal{I}$ and show that it is a satisfying assignment.\\
The truth values to variables in instance $\mathcal{I}$ are assigned as follows. Consider the polynomial $t_{X_i}+t_{\overline{X}_i} \in F_v$ for $i \in \{1, \ldots ,n\}$. Exactly one among the terms $t_{X_i},t_{\overline{X}_i}$ has to be in $\oi$ and the other term in $\boi$. If $t_{X_i}$ is in $\oi$, then assign true value to variable $X_i$ and if $t_{\overline{X}_i}$ is in $\oi$, then assign false value to $X_i$.\\

\noindent \textbf{Claim. }The above assignment is a satisfiable assignment to $\mathcal{I}$.\\ 
\textit{Proof.}
Assume that the above assignment is not a satisfiable assignment then there exists a clause $C_l$ for $l \in \{1, \ldots ,m\}$ such that $C_l$ is not satisfied. Without loss of generality let $C_l$ be of the form $(X_i \vee \overline{X}_j \vee X_k)$, where $i,j,k \in \{1,, \ldots ,n\}$. Since $C_l$ is not satisfied, all of $t_{X_i},t_{\overline{X}_j},t_{X_k}$ are in $\boi$. From Corollary \ref{varclause}, this implies that $\frac{t_{X_i}x_{c_l}}{c_l},\frac{t_{\overline{X}_j}x_{c_l}}{c_l}, \frac{t_{X_k}x_{c_l}}{c_l} \in \oi$. Consider the polynomial\\
$$f=\frac{t_{X_i}x_{c_l}}{c_l}+\frac{t_{\overline{X}_j}x_{c_l}}{c_l}+\frac{t_{X_k}x_{c_l}}{c_l} \in F_c \enspace .$$
All the terms in the support of $f$ are in $\oi$. But this is not possible since $\mathcal{F}$ is a border basis and $f$ should contain exactly one term in $\boi$, a contradiction.\\   

\noindent Suppose that $\mathcal{I}$ is satisfiable. Let $A$ be a satisfying assignment to instance $\mathcal{I}$. Using $A$, we will construct an order ideal $\oi$ such that $\mathcal{F}$ is a $\oi$-border basis. For that we first construct sets $\oi$ and $T$ and prove the following statements.\\
\textit{i)} $\oi$ is an order ideal,\\
\textit{ii)} $T$ is the border of the order ideal $\oi$ \textit{i.e.} $T=\boi$,\\
\textit{iii)} $\mathcal{F}$ is a $\oi$-border prebasis and\\
\textit{iv)} $\mathcal{F}$ is a $\oi$-border basis.\\ 

\noindent We construct the set $T$ as follows.\\ 
1) For $i \in \{1, \ldots ,n\}$, if $X_i$ is assigned to be false in assignment $A$ then include $t_{X_i}$ in T. If $X_i$ is assigned to be true then include $t_{\overline{X}_i}$ in $T$\\
2) Let $C_l$ be a clause in instance $\mathcal{I}$ for $l \in \{1, \ldots ,m\}$. Assume that $C_l=(X_i \vee \overline{X}_j \vee X_k)$ for $i,j,k \in \{1, \ldots ,n\}$. Associated to this clause, we have the polynomial $$f=\frac{t_{X_i}x_{c_l}}{c_l}+\frac{t_{\overline{X}_j}x_{c_l}}{c_l}+\frac{t_{X_k}x_{c_l}}{c_l} \in \mathcal{F} \enspace .$$
If one term among $t_{X_i}, t_{\overline{X}_j}, t_{X_k}$, say $t_{X_i}$, is not in $T$ (if there are more than one term among $t_{X_i}, t_{\overline{X}_j}, t_{X_k}$ not in $T$ then pick one term arbitrarily) then include $\frac{t_{X_i}x_{c_l}}{c_l}$ in $T$. Thus, in the support of every clause polynomial no more than one term is included in $T$. \\
3) Include all the terms in the support of $F_1 \cup F_2$ to be in $T$. \\

\noindent \textbf{Claim. }Let $\oi=\mathbb{T}_{\leq 8} \backslash T$. $\oi$ is an order ideal.\\
\textit{Proof.}
All the terms of total degree 8 are in $T$ (by construction). Thus, $\oi$ contains terms of total degree 7 or less. If $t \in \oi$ and $t'|t$ then $\mathrm{deg}(t') < \mathrm{deg}(t) \leq 7$ which implies that $deg(t') < 7$. But since $T \subset \mathrm{Supp}(\mathcal{F})$ and $\mathrm{Supp}(\mathcal{F})$ contains no term of total degree less than 7, all the terms of total degree 6 or less are in $\oi$. Therefore, $t' \in \oi$.\\ 

\noindent \textbf{Claim. }$T$ is the border of the order ideal $\oi$ \textit{i.e.} $T=\boi$.\\
\textit{Proof.}
Let $t' \in \boi$. There exists a term $t \in \oi$ and an indeterminate $y$ such that $t'=ty$. Since all the terms in $\oi$ have total degree 7 or less, we have $\mathrm{deg}(t) \leq 7$ which implies that $t'=ty \in \mathbb{T}_{\leq 8}$. By our construction of $\oi$, this means that $t' \in T$. This proves that $\boi \subset T$.
\par In order to show $T \subset \boi$, it is enough to show that for a term $t \in T$, there exists an indeterminate $y$ such that $y|t$ and $\frac{t}{y}=t' \notin T$ \textit{i.e.} $t' \in \oi$. Now, since all the terms of total degree 6 or less are in $\oi$, all the terms of total degree 7 in $T$ are also in $\boi$. So, assume that there exists a term $t$ such that $\mathrm{deg}(t)=8$ and $k(t) \subset T$. We prove by contradiction that such a term cannot exist. Since all the terms of total degree 7 in $T$ are in $\cup_{i=1}^{n}R_{i}$, $k(t) \subset \cup_{i=1}^{n}R_{i}$. From Lemma \ref{regsamep}, $k(t)$ should be a subset of $R_{i}$ for some $i \in \{1, \ldots ,n\}$. There are two cases for $t$ as described below.\\
\textit{(i)} $t \in P_i$: Assume without loss of generality, $t=t_{X_i}x_{c_l} \in P_{X_i}$ for $i \in \{1, \ldots ,n\}, l \in \{1, \ldots ,m\}$. By our construction, both $t_{X_i}$ and $\frac{t_{X_i}x_{c_l}}{c_l}$ cannot be in $T$. Hence atleast one child of $t$ is in $\oi$ and thus not all terms in $k(t)$ is contained in $T$. So, this case is not possible.\\ 
\textit{(ii)} $t \notin P_i$: From Corollary \ref{boundsamep}, we have
\begin{eqnarray*}
|k(t) \cap R_i| & = & |k(t)| \leq |(P_{X_i} \cup P_{X_i})| \\
& \Rightarrow & |k(t)| \leq |P_i|\\
& \Rightarrow & |I(t)| \leq |P_i|.
\end{eqnarray*}
Now, for any term $t' \in k(t)$ we have $I(t') \leq |P_i|$. But from Lemma \ref{boundind}, $I(t'') \geq |P_i|+2$ for any term $t'' \in k(P_i)=R_i$. Thus this case is not possible. \\    
From the above two cases we get a contradiction that there exists a term $t$ such that $k(t) \subset R_i$ for some $i \in \{1, \ldots ,n\}$ and thus $k(t) \nsubseteq T$. So, $t$ has atleast one child in $\oi$. Thus, $T \subset \boi$.\\

\noindent \textbf{Claim. }$\mathcal{F}$ is a $\oi$-border prebasis.\\
\textit{Proof.}
In order to show $\mathcal{F}$ is a $\oi$-border prebasis, we have to show that each polynomial in $\mathcal{F}$ has exactly one term in $\boi$ and the rest of the terms in $\oi$. We show this for all the polynomials in $\mathcal{F}$:
\begin{itemize}
\item $t$-polynomials: From our construction, all the terms in the t-polynomials are in $T$ \textit{i.e.} in $\boi$ and hence each polynomial has exactly one term in $\boi$.

\item $v$-polynomials: Again by our construction, each $v$-polynomial has exactly one term in $T$ \textit{i.e.} $\boi$ and the other term in $\oi$.

\item $c$-polynomials: Consider a clause $C_l$ for $l \in \{1, \ldots ,m\}$. Assume that $C_l=(X_i \vee \overline{X}_j \vee X_k)$ where $i,j,k \in \{1, \ldots ,n\}$ in the instance $\mathcal{I}$. Let $f$ be the polynomial associated with the clause $C_l$:
$$f=\frac{t_{X_i}x_{c_l}}{c_l}+\frac{t_{\overline{X}_j}x_{c_l}}{c_l}+\frac{t_{X_k}x_{c_l}}{c_l} \in \mathcal{F}.$$
Since all the terms in the support of $f$ have total degree 7, the terms must either be in $\boi$ or $\oi$. Consider the following cases:\\
\textit{Case (i):} More than one term in $f$ is in $\boi$: this cannot happen from our construction.\\
\textit{Case (ii):} All the terms are in $\oi$: This can happen only if all of $t_{X_i},t_{\overline{X}_j},t_{X_k}$ are in $\boi$ which implies that $X_i,\overline{X}_j,X_k$ are false in assignment $A$. So, $C_l$ is false. But this is not possible since assignment $A$ satisfies instance $\mathcal{I}$. Hence this case is not possible.\\
From the above two cases, we deduce that exactly one term in the support of $f$ belongs to $\boi$ and from our construction, rest of the terms in $f$ must belong to $\oi$.
\end{itemize}
Since any polynomial in $\mathcal{F}$ must be either a $t$-polynomial, $c$-polynomial or $v$-polynomial, from the above argument we deduce that $\mathcal{F}$ is a $\oi$-border prebasis.\\

\noindent \textbf{Claim. }$\mathcal{F}$ is a $\oi$-border basis of $\ideal$.\\
\noindent \textit{Proof.}
Since $\mathcal{F}$ is a $\oi$-border prebasis, if $\mathcal{F}$ satisfies Buchberger criterion for border basis then $\mathcal{F}$ is a $\oi$-border basis. Thus we need to show that for any two neighbouring polynomials $f,g \in \mathcal{F}$, $S(f,g)$ can be written as a linear combination of polynomials in $\mathcal{F}$. Before we consider the following cases for $f$ and $g$ we note that any polynomial containing only terms of total degree 8 in it's support can be expressed as a sum of $t$-polynomials in $F_1 \subset \mathcal{F}$. Thus, in order to prove that $\mathcal{F}$ satisfies Buchberger criterion it is enough to show that the support of $S(f,g)$ contains only terms of total degree 8. Neighbouring polynomials $f,g$ can be of the following cases,\\ 
\textit{Case (i):} $f$ and $g$ are $t$-polynomials: then $S(f,g)=0$.\\
\textit{Case (ii):} $f$ is a $t$-polynomial and $g$ is a $c$-polynomial or a $v$-polynomial: All the terms in $\mathrm{Supp}(g)$ have total degree 7. Hence for any indeterminate $y$, all the terms in $\mathrm{Supp}(yg)$ are of total degree 8. If $f \in F_2$, then $yf$ for any indeterminate $y$ is also a $t$-polynomial of total degree 8. The S-polynomial of $f$ and $g$ can be\\
$$S(f,g)=f-y_1g$$ or
$$S(f,g)=y_2f-y_1g,$$
for some indeterminates $y_1,y_2$. In the first case, $f$ has to be in $F_1$ (if $f$ were to be in $F_2$, by the way we have written the S-polynomial the border term of total degree 7 in $f$ is equal to $y_1b$ of total degree 8 where $b$ is the border term in $g$ which is not possible) and hence support of $S(f,g)$ contains only terms of total degree 8. The second case can happen only if $f \in F_2$ and hence support of $S(f,g)$ contains only terms of total degree 8.\\
\textit{Case (iii):} $f$ and $g$ are not $t$-polynomials: S-polynomial of $f$ and $g$ is of the form,
$$S(f,g)=y_1f-y_2g,$$
for some indeterminates $y_1,y_2$. As argued before, all the terms in the support of $y_1f$ and $y_2g$ are of total degree 8. Hence, all the terms in the support of $S(f,g)$ contains only terms of total degree 8. From the three cases it follows that $\mathcal{F}$ is a $\oi$-border basis of $\ideal$.\\
\end{proof}
Thus, we have proved that $I$ has a satisfying assignment if and only if $\mathcal{F}$ is a $\oi$-border basis of $\ideal=\langle \mathcal{F} \rangle$ for some order ideal $\oi$. There is a polynomial time reduction from 3,4-SAT instance to BBD instance and since 3,4-SAT is NP-complete, we have the result that BBD is NP-complete.

\section{Conclusion}
In this paper we introduced the Border Basis Detection and proved it to be NP-complete.
 
{\footnotesize
\section*{\footnotesize Acknowledgments}
Authors would like to thank Vikram M. Tankasali and Prashanth Puranik for their useful comments and suggestions on this work.
}

\bibliographystyle{plain}
\bibliography{prabhanjan}

\end{document}